\definecolor{darkgreen}{rgb}{0.0,0.5,0.0}
\definecolor{gpink}{rgb}{1,0.0,0.8}
\definecolor{barRed}{RGB}{238,123,119}
\definecolor{barBlue}{RGB}{115,115,247}
\definecolor{barGrey}{RGB}{205, 205, 205}
\definecolor{mygray}{RGB}{165, 165, 165}
\pgfplotsset{compat=1.11}
\newcounter{IEEE@bibentries}
\renewcommand\IEEEtriggeratref[1]{%
  \renewbibmacro{finentry}{%
    \stepcounter{IEEE@bibentries}%
    \ifthenelse{\equal{\value{IEEE@bibentries}}{#1}}
    {\finentry\@IEEEtriggercmd}
    {\finentry}%
  }%
}
\newcommand{\R}{\mathcal{R}}
\newcommand{\type}[1]{\textsf{type}(#1)}
\newcommand{\state}[1]{\textsf{\textsf{value}(#1)}}
\newcommand{\Nm}[1]{N^{^{-}}(#1)}
\newcommand{\cond}{\mathbf{X}}
\theoremstyle{definition}
\newtheorem{definition}{Definition}
\newtheorem{theorem}{Theorem}
\newtheorem{lemma}{Lemma}
\newtheorem{example}{Example}
\newtheorem{problem}{Problem}
\begin{document}
\title{Sensor Placement for Online Fault Diagnosis}
\author{Dhananjay Raju, Georgios Bakirtzis, and Ufuk Topcu
\thanks{The authors are with The University of Texas at Austin, Austin, TX, 78712, USA (e-mail: \{draju,~bakirtzis,~utopcu\}@utexas.edu). } 
}

\maketitle
\frenchspacing
\begin{abstract}

Fault diagnosis is the problem of determining a set of faulty system components that explain discrepancies between observed and expected behavior. Due to the intrinsic relation between observations and sensors placed on a system, sensors' fault diagnosis and placement are mutually dependent. Consequently, it is imperative to solve the fault diagnosis and sensor placement problems jointly. One approach to modeling systems for fault diagnosis uses answer set programming (ASP). We present a model-based approach to sensor placement for active diagnosis using ASP, where the secondary objective is to reduce the number of sensors used. The proposed method finds locations for system sensors with around 500 components in a few minutes. To address larger systems, we propose a notion of \emph{modularity} such that 
it is possible to treat each module as a separate system and solve the sensor placement problem for each module independently. Additionally, we provide a fixpoint algorithm for determining the modules of a system.
\end{abstract}
\begin{IEEEkeywords}
Fault diagnosis, sensor placement, minimality, modularity, model-based design
\end{IEEEkeywords}

\section{Introduction}

\emph{Fault diagnosis} is critical for a system's robust and reliable operation~\cite{avizienis:2004}. Identifying faults and subsequent remedial action increases productivity and reduces maintenance costs in various industrial applications. The importance of fault diagnosis cannot be stressed enough, as a single faulty component may degrade system performance to unacceptable levels. Indeed, in the case of the 2009 Air France plane crash, ice crystals threw off the plane's airspeed sensors, and consequently, its autopilot disconnected~\cite{Breeden2022Oct}, thereby transitioning the system to a hazardous state causing an accident.

Active diagnosis detects faults during runtime, allowing corrective control actions to transition the system outside of hazardous states before they cause an accident. Fault diagnosis relies on three complementary aspects: fault detection, fault isolation, and fault identification. We study the first two tasks under the generic terminology of \emph{fault diagnosis}, where the objective is to detect faulty components in the presence of imperfect information intrinsic to sensors.

Approaches to the fault diagnosis problem broadly categorize as either model-based or experiential. The model-based diagnosis uses system models usually given in the direction from causes to effects for diagnosis~\cite{reiter:87, Davis1984Dec}. In contrast, experiential approaches codify the rules of thumb, statistical intuitions, and past experience of experts. However, the structure or design of the corresponding system under diagnosis is only weakly represented, if at all~\cite{Gao2015}. To ensure adaptability and reduce development costs, we examine the problem of fault diagnosis from a model-based perspective.

A \emph{diagnosis system} is a tuple $(\textsf{SYS},\textsf{COMP})$, where $\textsf{SYS}$ is a model (description) of the system, and $\textsf{COMP}$ is a set of observable components of the system~\cite{reiter:87}. Henceforth, when the context is clear, we shall refer to a diagnosis system as a system. For a half-adder (e.g., fig.~\ref{fig:adder}), $\textsf{SYS}$ is a description of the working of a NAND gate, i.e., its logic table in the form of first-order logic sentences and the link between the various gates. The set $\textsf{COMP}$ of observable components is $\{a$,$b$,$s$,$c\}$, i.e., the inputs and the outputs. We assume that there are diagnostic sensors providing values for these components.

One practical model-based approach to fault diagnosis uses the representation of \emph{labeled directed graphs}. The vertices correspond to system components, the labels correspond to the type of component, and the edges correspond to the connections between components. First-order predicate logic sentences capture \emph{rules} that model the working of the system~\cite{Rautenberg}. This system model is general enough to capture, for example, most analog and digital circuits, or structural dependency relations, among others.

In the diagnosis problem, given a \emph{system} with a set of sensors, the fault detection algorithm classifies a collection of \emph{faulty} components that explains the observed (at runtime) sensor values. The number of sensors and placement determines whether sensor diagnosis is possible, which we demonstrate through an example (section~\ref{sec:model_based}). To exploit this connection between sensors and diagnosis, we frame the complementary \emph{sensor placement problem} as follows: given a system, find a subset of components to place sensors such that it is possible to solve the fault diagnosis problem. Jointly addressing these two problems alleviates the effect of imperfect information caused due to a limitation on the number of sensors. 

\begin{figure}[t!]
    \centering
    \includegraphics[width=0.7\linewidth]{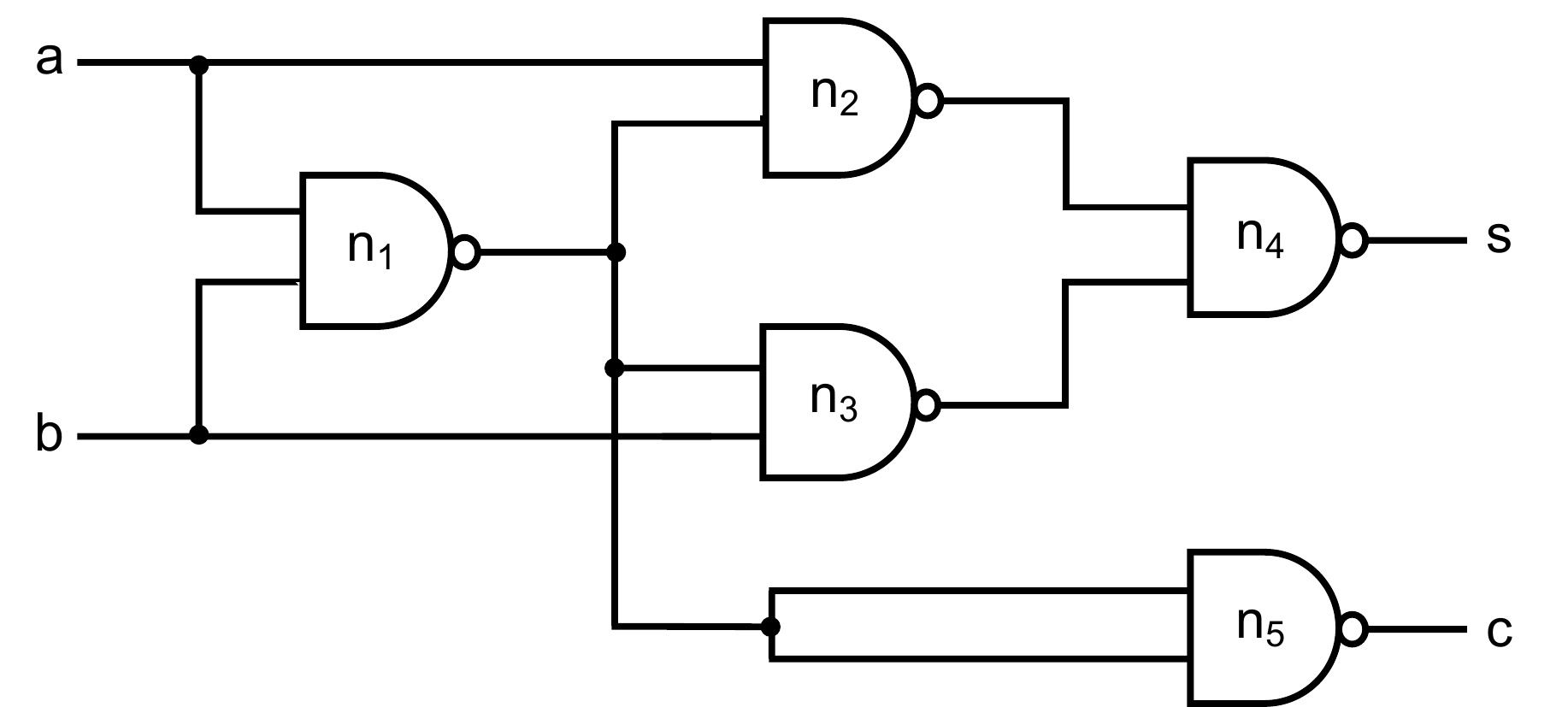}
    \caption{Half-adder circuit using NAND gates.}
    \label{fig:adder}
\end{figure}

The classic algorithm is a static procedure, i.e., it is not possible to reconfigure the system at runtime to obtain more observations. However, reconfiguring a system at runtime, for example, by changing the state of the switches in an analog circuit or inputs to a digital circuit, may produce additional observations with information that lead to greater diagnostic accuracy, i.e., better fault isolation (section~\ref{sec:model_based}). We extend the classic approach and allow runtime reconfiguration of the system subject to certain safety constraints. Intuitively, these safety constraints induce a set of configurations that are insufficient for diagnosis.

\subsubsection*{Proposed approach}

In general, a model-based approach for solving the fault diagnosis problem may not be \emph{decidable} depending on the underlying logic used to describe the rules of the system. Therefore, the key is choosing a modeling logic that is expressive enough to capture the system while also being decidable~\cite{Rautenberg}. We formulate the system model using first-order predicate logic sentences (definition~\ref{defn:rule}) that can be encoded in answer set programming (ASP), a decidable fragment of first-order predicate logic~\cite{lifschitz2003}. ASP is a logic paradigm that allows recursive definitions, weight constraints, default negation, and external atoms for solving combinatorial optimization problems~\cite{Erdem2016Oct}. The use of ASP for sensor placement has succeeded in fault diagnosis~\cite{Wotawa2022Apr}.

Model-based approaches using logical formulations follow a declarative paradigm~\cite{lifschitz2003}. To solve the sensor placement problem declaratively, we characterize the correctness of sensor locations in ASP and solve this ASP encoding. The sensors' locations are correct if, for every fault scenario (i.e., a combination of faulty components), the active diagnosis algorithm (algorithm~\ref{alg:active}) correctly detects the faulty components.

For this characterization, we formulate in ASP, 
\begin{enumerate}[label=(\alph*)]
\item fault scenarios, %
\item sensor positions, %
and
\item switch configurations. %
\end{enumerate}
We solve the sensor placement problem by solving the ASP encoding $\Pi_\text{sensor-active}$ (section~\ref{sec:sensor}) as a conjunction of these ASP encodings that can be described independently from the system to be diagnosed. %

\subsubsection*{Contributions}
Our main contributions are threefold.
\begin{enumerate}
    \item A principled approach for \emph{finding} the locations of sensors and safe configurations for active diagnosis.
    \item \emph{Reusable} ASP encodings describing the fault scenarios, switch configurations, and consistency of diagnosis. More specifically, these encodings are independent of the system and the underlying diagnosis or sensor placement problem being solved.
    \item A domain-independent notion of \emph{modularity}, which ensures that it is sufficient to solve the sensor placement problem for each \emph{module} independently of the system. Furthermore, the sensor placement problem for identical modules needs to be solved only once.
\end{enumerate}
In particular, the proposed definition of modules enables the repurposing of the classic strongly connected components algorithm for finding the modules~\cite{west_introduction_2000}. Moreover, the sensor placement algorithm (without modules) finds the positions of sensors in systems with $500$ components in a few minutes. We demonstrate the generality of the proposed approach through two scenarios: a $3$-bit adder, a digital circuit, and an electrical power system adapted from the Boeing 747 power system.

\section{Model-based fault diagnosis}
\label{sec:model_based}

In the context of model-based diagnosis, the inputs to any fault detection algorithm are a description of the system and a set of observations. In our case, the rules are expressed in first-order predicate logic (definition~\ref{defn:rule}). The output is a set of \emph{faulty} components corresponding to a minimal set of components that satisfy the following property: The assumption that each of these components is faulty, together with the assumption that all other components are behaving correctly, is consistent with the system rules and the observation.

We explain model-based diagnosis with an example (example~\ref{ex:1}). Half-adders (fig.~\ref{fig:adder}) use $s$ and $c$ to carry the result of adding two Booleans $a$ and $b$. The outputs are $s$ and $c$, where $s$ stores the sum, and $c$ stores the carry bit. Half-adders and NAND gates have straightforward output (table~\ref{tab:adder}).

\begin{table}[!t]
     \caption{The expected output for the half-adder and a NAND gate.}
    \centering
    \begin{minipage}{.45\linewidth}
    \centering
    \begin{tabular}{cc|cc}
    \toprule
         \textbf{a} & \textbf{b} & \textbf{s} & \textbf{c} \\
    \midrule
         0 & 0 & 0 & 0 \\
         0 & 1 & 1 & 0 \\
         1 & 0 & 1 & 0 \\
         1 & 1 & 0 & 1 \\
    \bottomrule 
    \end{tabular}
    \label{tab:adder}
    \end{minipage}
    \begin{minipage}{.45\linewidth}
    \centering
    \begin{tabular}{cc|c}
    \toprule
         $\textbf{in}_1$ & $\textbf{in}_2$ & \textbf{out} \\
    \midrule
         0 & 0 & 1\\
         0 & 1 & 1 \\
         1 & 0 & 1 \\
         1 & 1 & 0 \\
    \bottomrule 
    \end{tabular}
    \label{tab:nand}
    \end{minipage}
\end{table}
Say that the result of adding $a=1$ and $b=1$ using a respective adder (fig.~\ref{fig:adder}) is $s=0$ and $c=0$. We observe that the expected value of $c$ does not match the observed value of $c$. For example, this observational disparity (i.e., logical inconsistency) could be because gate $n_5$ is faulty. In the below formulation of the classic diagnosis problem, $\textsf{healthy}$ is a predicate that specifies that a component operates properly.

\begin{problem}[Fault diagnosis]
\label{defn:basic_prob}
Given a diagnosis system of the form $(\textsf{SYS}, \textsf{COMP})$ along with a set $\textsf{OBS}$ of values for the observable components in $\textsf{COMP}$, 
find a set $\Delta \subseteq V$ such that the formula
$\textsf{SYS}\cup \textsf{OBS} \cup \{ \lnot \textsf{healthy}(X)\mid X \in \Delta \} \cup \{\textsf{healthy}(X)\mid X \in V \setminus \Delta\}$
is \emph{consistent}. 
\end{problem}

Intuitively, a diagnosis (solution to the above problem) is a conjecture that certain components are faulty and the rest healthy. The problem is to specify which components we conjecture to be faulty to make the system model and observation consistent. A fundamental area for improvement of this procedure is that there may be multiple ways to restore the logical consistency of the system model and observations. 

\subsubsection*{Active diagnosis}

Let the values of the inputs and outputs remain unchanged, i.e., $a=1$, $b=1$, $s=0$ and $c=0$.
The disparity between the expected value and the observed value of $c$ can be explained under one of the following fault scenarios 1) $n_5$ is faulty, 2) $n_1$ and $n_3$ are faulty, or 3) $n_1$ and $n_5$ are faulty.
This ambiguity can be resolved when there are more observations. 
Say the output is $s=1$ and $c=1$ for the input $a=1$ and $b=0$. With this new observation, we can conclude that the fault is actually with the gate $n_5$. 
In essence, the active diagnosis procedure uses different inputs to isolate the faulty components. %

\subsubsection*{Sensor placement}
Instead of using additional inputs to generate new observations, auxiliary sensors can be used to generate the observations for fault isolation. For example, for the half-adder, if there is a sensor at $n_1$, the faulty components can be localized without using different inputs. Indeed, for the same input-output combination of $a=1$, $b=1$ and $s=0$ and $c=0$, we conclude that $n_5$ is the faulty component.

The subsequent sections formalize this intuition for solving the sensor placement problem.

\section{Modeling systems and their behaviors}
\label{sec:modeling}

In the interest of generality, we define a domain-independent concept of a system and its behavior. Formally, we study systems where a labeled digraph models the components and the connections between these components. Sentences represent the rules governing the system's behavior in the first-order predicate logic with equality~\cite{Rautenberg}. 

\begin{definition}[System]
\label{defn:SYS}
A \emph{system} is a tuple $\langle \mathcal{G},\chi \rangle$, where
\begin{itemize}
    \item $\mathcal{G} = (V,E)$ is a directed graph,
    \item $\chi\colon V \to T$ is a function that returns a component's \emph{type}. %
\end{itemize}
\end{definition}

In this system model, vertices correspond to the various system components, and the edges correspond to the connections or interfaces between components. Each component of the system belongs to a certain type. For example, in the half-adder (fig.~\ref{fig:half}), $n_1,\cdots,n_5$ are NAND gates; $a$ and $b$ are inputs; $s$ and $c$ are outputs.

Each type of component $t \in T$ has a finite set $S_t$ of states associated with it. Even though the half-adder has only one component type, other digital circuits generally have more types, like XOR gates and multiplexers, to name a couple of different types. We define the state of a component as the value of its output.
The function $\textsf{value} \colon V \to \cup_{t \in T}{S_t}$ returns the \emph{current state} of a component. %

A \emph{configuration} is any input to the system. In analog systems with switches (section~\ref{sec:case_study}), the inputs correspond to the current states of these switches, i.e., a configuration is a function that returns the switches' states (either on or off). %

A component's \emph{condition}, $v \in V$, denoted by~$\cond_v$, is a propositional formula in the conjunctive normal form on the component's type and current state. More specifically, given a subset $A \subseteq T$ of types and a subset $B \subseteq \cup_{t \in T}{S_t}$ of states, the condition~$\cond_v$ is a logical formula insisting that the type $\chi(v)$ of the component $v$ belongs to $A$ and the current state $\state{v}$ of the component belongs to $B$, i.e., 
$$
\cond_v = \bigvee_{t \in A} (\chi(v) = t) \land \bigvee_{s \in B} (\state{v} = s).
$$

\begin{example}\label{ex:1}
The following condition states that the NAND gates in the system output $1$ and XOR gates output 0.
\begin{align*}
((\type{v}  = \text{nand}) &\land (\state{v} = 1)) \\ &\lor ((\type{v}  = \text{xor}) \\ &\land (\state{v} = 0)).
\end{align*}
\end{example}

\begin{definition}[Rule]
\label{defn:rule}
A \emph{rule} is a logical formula of the form 
\begin{align*}
\left( (\chi(v) = t) \land \bigwedge_{u \in N_1}  \cond_u  \land \bigvee_{u \in N_2}  \cond_u \right)  \rightarrow 
 \bigvee_{s \in S} (\state{v} = s),
\end{align*}
where $t \in T, S \subseteq S_t, \text{ and } N_1 \uplus N_2 = \Nm{v}$. 
\end{definition}
The above rule states that if a vertex $v$ is of type $t$ such that
\begin{enumerate}[label=(\alph*)]
\item its incoming neighbors in $N_1$ must satisfy the set $\{\cond_v \colon v \in N_1 \}$ of conditions and
\item its incoming neighbors in $N_2$ must satisfy one of the conditions in $\{\cond_v : v \in N_1 \}$
then the current state of $v$ belongs to $S$.
\end{enumerate}

Associated with every system $\textsf{SYS}$ is a set $\R_{\textsf{SYS}}$ of \emph{rules} that describe the  behavior of the system. 

\subsubsection*{Behavior of the half-adder}
\label{sec:orig_adder}

To describe a system's behavior, we describe its components' behavior. Because the behaviors of any two components of the same type are identical, it is enough to describe the behavior of each type of component.

This section presents the model for the half-adder (fig.~\ref{fig:adder}). The components in the half-adder interface to other components through ports, namely the inputs and the output of the NAND gate. %
The value of the component is the value of its output port.
We make use of the following predicates to capture the values and connections between the components.\begin{itemize}
    \item $\textsf{value}(p(c),v)$ states that the value of port $p$ of component $c$ is $v$ and
    \item $\textsf{link}(p1,p2)$ states that there exists a connection between two ports $p_1$ and $p_2$.
\end{itemize}
Predicates encode the facts and inference rules for diagnosis and sensor placement problems (table~\ref{tab:predicates}).  %

\begin{table}[t!]
    \centering
    \caption{Predicates and their interpretation.}
    \begin{tabular}{ll}
        \toprule
        \textbf{Predicate} & \textbf{Meaning}  \\
        \midrule 
         $\textsf{component}(x)$ & $x$ is a component \\
         $\textsf{type}(x,t)$ & $t$ is the type of component $x$ \\
         $\textsf{link}(x,y)$ & components $x$ and $y$ are linked \\
         $\textsf{in}_1(x)$ & first input port of component $x$ \\
         $\textsf{in}_2(x)$ & second input port of component $y$ \\
         $\textsf{out}(x)$ & output port of component $x$ \\
         $\textsf{value}(x,v)$ & value of component $x$ is $v$ \\
         $\textsf{healthy}(x)$ & component $x$ is healthy, i.e., not faulty \\
         $\textsf{faulty}(x)$ & component $x$ is faulty \\
         $\textsf{switch}(x,y)$ & there is a switch on link $(x,y)$ \\
         $\textsf{on}(x,y)$ & the switch $(x,y)$ is on \\
         $\textsf{on}(i,x,y)$ & switch $(x,y)$ is on in configuration $i$ \\
         $\textsf{value}(x,v)$ & expected value of component $x$ is $v$ \\
         $\textsf{inferred}(x,v)$ & inferred value of component $x$ is $v$ \\
         $\textsf{value}(i,x,v)$ & expected value of $x$ in configuration $i$ is $v$ \\
         $\textsf{inferred}(i,x,v)$ & inferred value of $x$ in configuration $i$ is $v$ \\
         $\textsf{input}(x,v)$ & observed value of component $x$ is $v$ \\
         $\textsf{input\_healthy}(x)$ & component $x$ is healthy as per the observation \\
         $\textsf{input\_faulty}(x)$ & component $x$ is  component $x$ is faulty \\ & as per the observation \\
         $\textsf{sensor}(x)$ & there is a senor at component $x$ \\
         $\textsf{fault}(s,x)$ & component $x$ is faulty in fault scenario $s$ \\
         $\textsf{value}(s,i,x,v)$ & in the fault scenario $s$, the expected value of \\&  component $x$   due to configuration $i$ is $v$ \\
         $\textsf{inferred}(s,i,x,v)$ & in the fault scenario $s$, the inferred value of \\&  component $x$   due to configuration $i$ is $v$ \\
         \bottomrule
    \end{tabular}
    \label{tab:predicates}
\end{table}

To model the behavior of the adder, we describe the behavior of a NAND gate by inspecting its truth table (table~\ref{tab:nand}).
For each row in the truth table of the NAND gate, we create a rule that corresponds to it. The following list provides the rules for the truth table.
\smallskip
\begin{enumerate}[label={(R\arabic*)},itemsep=.8em, leftmargin=*]
\item $\begin{aligned} %
    \forall X \colon \quad \textsf{component}(X) &\land \textsf{type}(X,\text{nand}) \land\\
      \textsf{value}(\textsf{in}_1(X),0) &\land \textsf{value}(\textsf{in}_2(X),0) \\
     \longrightarrow & \textsf{value}(\textsf{out}(X),1).
\end{aligned}$
\item $\begin{aligned}\forall X \colon \quad \textsf{component}(X) &\land \textsf{type}(X,\text{nand}) \land \\
     \textsf{value}(\textsf{in}_1(X),0)  &\land \textsf{value}(\textsf{in}_2(X),1)\\
     \longrightarrow & \textsf{value}(\textsf{out}(X),1).\end{aligned}$
\item $\begin{aligned}\forall X \colon \quad  \textsf{component}(X) & \land \textsf{type}(X,\text{nand}) \land \\
      \textsf{value}(\textsf{in}_1(X),1)  & \land \textsf{value}(\textsf{in}_2(X),0) \\
      \longrightarrow & \textsf{value}(\textsf{out}(X),1).\end{aligned}$
\item  $\begin{aligned}\forall X \colon  \quad  \textsf{component}(X) & \land \textsf{type}(X,\text{nand}) \land \\
      \textsf{value}(\textsf{in}_1(X),1)  &\land \textsf{value}(\textsf{in}_2(X),1) \\
      \longrightarrow & \textsf{value}(\textsf{out}(X),0).\end{aligned}$
\end{enumerate}

\smallskip

The assumption that there are no link faults and no switches in the adder allows us to write the rule for every pair of connected ports as:
\begin{align*}
    \forall P1, P2, V \colon \textsf{link}(P1,P2) \land \textsf{value}(P1,V) \rightarrow \textsf{value}(P2,V).
\end{align*}

\subsubsection*{Structure of the half-adder}
Unlike the system's behavioral description, its structure (i.e., components and connections), is unique. Hence, all the components and the connections must be encoded separately.
We use the predicates $\textsf{component}$, $\textsf{type}$, and $\textsf{link}$ to encode the structure of the electrical power system (table~\ref{tab:predicates}). %
However, for the half-adder circuit, we also require the predicates $\textsf{in}_1, \textsf{in}_2$, and $\textsf{out}$ that differentiate the input and output ports of the NAND gate.

The system for the half-adder has the following components 
\begin{align*}
 V =   &~\{n1,\cdots,n5\} \cup\\  
    &\cup \{\textsf{in}_1(n1),\cdots,\textsf{in}_1(n5)\} \cup \{\textsf{in}_2(n1),\cdots,\textsf{in}_2(n5)\} \cup\\
    & \cup\{\textsf{out}(n1),\cdots,\textsf{out}(n5)\}.
\end{align*}
The predicate $\textsf{component}$ is described by the set 
$\{\textsf{component}(X)\mid X \in V\}$.
The connections in the half-adder are specified by the $\textsf{link}$ predicate given by the set construction
$ \left\{ \textsf{link}(X,Y)\mid (X,Y) \text{ is an edge in } E \right\}.$
The $\textsf{type}$ predicate is defined by the set 
\begin{align*}
    \{  \textsf{type}(n1,\text{nand}), & \textsf{type}(n2,\text{nand}), \textsf{type}(n3,\text{nand}), \\ & \textsf{type}(n4,\text{nand}), \textsf{type}(n5,\text{nand}) \}. 
\end{align*}

The following section provides a brief overview of ASP. This overview is required to comprehensively discuss the formulation and solution of the fault diagnosis problem.

\section{Answer set programming}
\label{sec:asp}

ASP is a logic paradigm that allows recursive definitions, weight constraints, default negation, and external atoms  for solving combinatorial optimization problems~\cite{Erdem2016Oct}. 
For space limitations, we only describe a subset of ASP relevant to this article. An ASP \emph{basic encoding} is a set of rules of the form,
\[p\leftarrow q_{1},q_{2},\ldots,q_{n}\]
where $n \geq 0$, and $p , q_1 ,\cdots ,q_n$ are propositional atoms. 
The interpretation of this rule is: $p$ is true/provable if so are $q_1, \cdots , q_n$. 
When $n=0$, the rule has an empty body. 
Such rules are called facts and are usually written as $p$. 

A model or solution of an ASP basic encoding is a set $M$ of propositional atoms that intuitively contains all and only the atoms that are true. Formally, $M$ is a solution of the basic encoding $\Pi$ if and only if $M$ is the subset-minimal set such that for every rule $p \leftarrow q_1,q_2,\cdots,q_n \in \Pi$ such that $\{q_1,\cdots,q_n\} \subseteq M$, then $p \in M$. 

A relevant syntactic element is the so-called negation as failure. Rules containing such negated atoms are of the form:
\begin{align*}p\leftarrow q_{1},q_{2},\ldots,q_{n}, \textsf{not}\: r_{1}, \textsf{not}\: r_{2}, \ldots, \textsf{not}\: r_{k}.
\end{align*}
Intuitively, we interpret the above rule as $p$ is true if $q_1, \cdots, q_n$ are true and none of $r_1, \cdots,r_k$ are true. 
The precise semantics of programs that include negation as failure is simple but a little more involved, requiring the introduction of  \emph{stable} models~\cite{lifschitz2003,FerrarisL05}. 

Another relevant type of rule is the \emph{choice} rule,
\begin{align*} \{p_{1},p_{2},\ldots,p_{n}\}=k \leftarrow q_{1},q_{2},\ldots,q_{m},\end{align*}
where $k \in \mathbb{N}^+$. 
This rule means that if $q_1,q_2,\dots,q_m$ are all true, then $k$ of the elements in $\{p_1,p_2,\cdots,p_n\}$ must appear in any solution. 
This definition allows programs to have multiple models. For example, the encoding $\{s,\{p,q,r\}=1 \leftarrow s\}$ with two rules has three solutions: $\{p,s\} ,\{q,s\}$, and $\{r,s\}$.

Another type of rule that captures impossibilities is a \emph{constraint}, given by
$\bot \leftarrow p_{1},p_{2},\ldots,p_{n}.$
This constraint prohibits the occurrence of $\{p_1,p_2,\cdots,p_n\}$ in the solution. 

ASP programs may contain variables to represent rule schemes. These schemes directly help us encode predicate logic sentences.
These rules look like
$p(X)\leftarrow q(X)$,
where uppercase letters represent \emph{variables}. 
A variable in an encoding $\Pi$ can take any value among the set of terms of $\Pi$.
As such, the above rule represents that when $c$ is a term and $q(c)$ is true, so is $p(c)$. 
Intuitively, a term represents an object (constants) that occurs in the encoding. 
The set of terms for an encoding is syntactically determined from the encoding using the objects mentioned in it (Herbrand base~\cite{Rautenberg}). 

In the process of finding a solution for an encoding, an initial step that is carried out is \emph{grounding}. Grounding instantiates rules with variables, effectively removing all variables from the encoding. %

For example, the grounded version of the ASP encoding: 
\begin{gather*}
\{q(a),q(b),p(X)\leftarrow q(X)\}  \\
\text{is } \{q(a), q(b), p(a)\leftarrow q(a), p(b) \leftarrow q(b)\}.
\end{gather*}

To find the solution for the encoding with variables, we simply find the solution for the corresponding grounded encoding.
The \emph{reduct} $F^X$ of a propositional formula $F$ with respect to $X$, a set of propositional atoms, is the formula obtained from $F$ by replacing each maximal subformula not satisfied by $X$ by $\bot$. Finally, $X$ is a \emph{stable model} of $F$ if $X$ is \emph{minimal} (set inclusion) among the sets satisfying $F^X$. $X$ is a \emph{solution} for an ASP encoding if and only if $X$ is a stable model for the ASP encoding~\cite{FerrarisL05}.

\section{The Diagnosis Problem}
\label{sec:basic}

During the operation of a system, some of its components may become faulty. However, the rules modeling the system (section~\ref{sec:modeling}) assume that the components are not faulty. 
We require another set of rules without this assumption, i.e., these rules must explicitly take into account the components' health.

Before discussing the modified rules, we present simplifying assumptions that do not affect the generality.
\begin{enumerate}
    \item Any component that is not faulty is healthy (law of excluded middle~\cite{Rautenberg}). 
\begin{align*}
     \textsf{healthy}(X) &\leftarrow \textsf{component}(X), \textsf{not faulty}(X) \text{ and }\\
     \textsf{faulty}(X) &\leftarrow  \textsf{component}(X), \textsf{not healthy}(X). 
\end{align*}
    
    \item The output of a faulty component is $0$.
    \begin{align*}
    \textsf{value}(\textsf{out}(X),0) \leftarrow \textsf{component}(X), \textsf{faulty}(X).
    \end{align*}
\end{enumerate}

The modified rules for the half-adder (NAND gate) are:

\smallskip
\begin{enumerate}[label={(R\arabic*')},itemsep=.8em, leftmargin=*]
\item $\begin{aligned}
    \textsf{value}(\textsf{out}(X),1) \leftarrow \textsf{component}(X),  \textsf{healthy}(X), \\
    \textsf{type}(X,\text{nand}), \textsf{value}(\textsf{in}_1(X),0), \textsf{value}(\textsf{in}_2(X),0).
\end{aligned}$
\item $\begin{aligned}
 \textsf{value}(\textsf{out}(X),1) \leftarrow \textsf{component}(X),  \textsf{healthy}(X), \\
    \textsf{type}(X,\text{nand}), \textsf{value}(\textsf{in}_1(X),1), \textsf{value}(\textsf{in}_2(X),0).
\end{aligned}$
\item $\begin{aligned}
 \textsf{value}(\textsf{out}(X),1) \leftarrow \textsf{component}(X),  \textsf{healthy}(X), \\
    \textsf{type}(X,\text{nand}), \textsf{value}(\textsf{in}_1(X),0), \textsf{value}(\textsf{in}_2(X),1).
\end{aligned}$
\item $\begin{aligned}
 \textsf{value}(\textsf{out}(X),0) \leftarrow \textsf{component}(X),  \textsf{healthy}(X), \\
    \textsf{type}(X,\text{nand}), \textsf{value}(\textsf{in}_1(X),1), \textsf{value}(\textsf{in}_2(X),1).
\end{aligned}$
\end{enumerate} \smallskip
The variable $\R_\text{adder}$ encapsulates the above four rules.

\subsubsection*{Fault diagnosis problem}

This section presents the formal statement of the fault diagnosis problem and provides an ASP formulation to identify the faulty components.

Recall the fault diagnosis problem formulation~(problem~\ref{defn:basic_prob}).
We refer to the set $\Delta$ of components as the \emph{diagnosis} and define \emph{minimality} of diagnosis with respect to the size of the diagnosis (i.e., $|\Delta|$).

We need a logical system expressive enough to formalize relevant knowledge and a fast solver to apply model-based reasoning in practice.
We use ASP in the input language of clingo~\cite{FerrarisL05,Gebser2014ClingoA}. The choice of encoding the problem in ASP is justified by the fact that ASP is powerful for solving default logic~\cite{ChenWZZ10} and the relation between default logic and diagnosis is well-established~\cite{Reiter1980ALF}.

In addition to the system rules $\R_{\textsf{SYS}}$ (the structure and the behavior) and the observations $\R_\text{OBS}$, we require a rule to capture the fact that every component is either faulty or healthy. We use the following ASP choice rule to capture a combination of healthy and faulty components, 
\begin{gather*}
     \{\textsf{healthy}(X)\}\leftarrow \textsf{component}(X) \text{ and }\\
     \textsf{faulty}(X) \leftarrow \textsf{component}(X), \textsf{not healthy}(X).
\end{gather*}
We denote the above two rules by $\R_\text{health}$.

The criterion of successful diagnosis is that the components' observed and expected (inferred) values match.  This requirement is captured by the following two constraints.
\begin{align*}
    & \leftarrow \textsf{value}(X,V),\textsf{not observed}(X,V) \text{ and }\\
    & \leftarrow \textsf{not value}(X,V), \textsf{observed}(X,V).
\end{align*}
We encapsulate the above rules for consistency by $\R_\text{consistency}$.

Lastly, we enforce the minimality of diagnosis with the following rule ($\R_\text{min}$).
\begin{align*}
    \#minimize\{1,X:\textsf{faulty}(X)\}.
\end{align*}

The set of faulty components correspond to the solution of the ASP encoding $\Pi_\text{basic} \cup \R_\text{OBS}$, where \[\Pi_\text{basic} = \R_{\textsf{SYS}} \cup \R_\text{health} \cup \R_\text{consistency} \cup \R_\text{min}.\]
Alternatively, a minimal diagnosis can be obtained without the above minimization rule using the incremental diagnosis algorithm (IDIAG)~\cite{Wotawa2020Sep}.

\begin{lemma}
\label{lemma:basic}
The solution of ASP encoding $\Pi_\text{basic} \cup \R_\text{OBS}$ corresponds to a diagnosis for the system $\textsf{SYS}$, i.e., if $x$ is a faulty component then $\textsf{faulty(x)}$ must belong to the solution.
\end{lemma}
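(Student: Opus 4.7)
The plan is to establish the correspondence by showing that every stable model $M$ of $\Pi_\text{basic} \cup \R_\text{OBS}$ induces, via $\Delta := \{x : \textsf{faulty}(x) \in M\}$, a set $\Delta$ satisfying the consistency condition of problem~\ref{defn:basic_prob}. Conversely, I would argue that every diagnosis $\Delta$ in the sense of problem~\ref{defn:basic_prob} corresponds to at least one stable model of $\Pi_\text{basic} \cup \R_\text{OBS}$ in which $\textsf{faulty}(x)$ appears for exactly the $x \in \Delta$. Combined, these two directions give the ``must belong to the solution'' clause of the lemma.

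First I would use $\R_\text{health}$ to show that any stable model $M$ assigns a unique health status to every component. The choice rule $\{\textsf{healthy}(X)\} \leftarrow \textsf{component}(X)$ together with the default rule $\textsf{faulty}(X) \leftarrow \textsf{component}(X), \textsf{not healthy}(X)$ enforces, by the stable model semantics recalled at the end of section~\ref{sec:asp}, that for each component $x$ exactly one of $\textsf{healthy}(x)$ or $\textsf{faulty}(x)$ belongs to $M$. Hence $\Delta$ and $V \setminus \Delta$ partition the set of components, matching the shape of the set in problem~\ref{defn:basic_prob}.

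Next I would argue semantic soundness. Because $M$ is a stable (hence classical) model of the ground program, it satisfies every rule in $\R_{\textsf{SYS}}$ and $\R_\text{OBS}$ when read as a first-order implication. Interpreting the $\textsf{healthy}/\textsf{faulty}$ atoms of $M$ as truth values of the corresponding literals in problem~\ref{defn:basic_prob}, $M$ is a propositional model of $\textsf{SYS} \cup \textsf{OBS} \cup \{\lnot\textsf{healthy}(X) : X \in \Delta\} \cup \{\textsf{healthy}(X) : X \in V \setminus \Delta\}$. The two constraints in $\R_\text{consistency}$ rule out any $M$ in which an inferred $\textsf{value}(x,v)$ disagrees with an observed one, so no inconsistency between observed and derived values can remain. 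Therefore, that formula admits a model and is consistent, which is exactly the condition of problem~\ref{defn:basic_prob}.

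For the converse direction, given a diagnosis $\Delta$ of problem~\ref{defn:basic_prob}, I would construct a candidate interpretation $M$ by taking $\textsf{faulty}(x)$ for $x \in \Delta$, $\textsf{healthy}(x)$ for $x \in V \setminus \Delta$, and closing under the rules $\R_{\textsf{SYS}}$ to populate the $\textsf{value}$ atoms; consistency of the problem-\ref{defn:basic_prob} formula guarantees this closure terminates without violating $\R_\text{consistency}$. The reduct of $\Pi_\text{basic} \cup \R_\text{OBS}$ with respect to $M$ eliminates precisely the default negations justified by $M$, and minimality of the closure under the remaining definite rules yields stability. I expect the main obstacle to be this stability argument: one must confirm that the reduct's minimal model coincides with $M$ despite the presence of recursion through $\R_{\textsf{SYS}}$, which is handled by observing that the rules in $\R_{\textsf{SYS}}$ are stratified once the health atoms are fixed, so the ASP stable-model and classical-least-model semantics agree on the $\textsf{value}$ layer.
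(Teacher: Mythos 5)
Your proof is essentially correct but follows a genuinely different route from the paper's. The paper gives a short, one-directional argument by contradiction: assuming a component $x \notin \Delta$ is nonetheless marked $\textsf{faulty}(x)$ in the stable model $S$, it observes (via the health rules) that $\textsf{healthy}(x) \notin S$, so no rule with $x$ in its body can fire, the correct head atom $\textsf{value}(y,c)$ is never derived, and the constraint $\leftarrow \textsf{not value}(y,c), \textsf{observed}(y,c)$ in $\R_\text{consistency}$ is violated --- contradicting stability. You instead prove a full two-way correspondence: (i) every stable model, read classically, is a model of the formula in problem~1, so the induced $\Delta$ is a diagnosis; and (ii) every diagnosis can be completed to a stable model by closing under $\R_{\textsf{SYS}}$ and invoking stratification of the value layer once the health atoms are fixed. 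Your version buys a stronger and more honest reading of ``corresponds to'' --- in particular it addresses the existence of a stable model at all, which the paper's proof silently assumes --- at the cost of the extra reduct/least-model machinery. One step in your converse direction deserves more care: classical consistency of the problem-1 formula guarantees that \emph{some} model exists, but not that the \emph{least} model of the definite rules (after fixing the health atoms) agrees with $\textsf{OBS}$; you need the additional observation that the rules $\R_{\textsf{SYS}}$ together with the faulty-output convention determine each $\textsf{value}$ atom uniquely from the inputs and the health assignment, so that the least model is the only candidate. Neither your proof nor the paper's accounts for the $\#minimize$ directive in $\R_\text{min}$, which restricts solutions to cardinality-minimal diagnoses; both arguments apply verbatim to the optimal stable models, so this is harmless.
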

\begin{proof}
Let $S$ be a solution (i.e., stable model) of the ASP encoding $\Pi_\text{basic}$ and observations $\R_{\textsf{SYS}}$. We show that if $\Delta$ is the set of faulty components, then for every component $x \notin \Delta$, $\textsf{healthy}(x) \in S$. We prove this statement by contradiction, i.e.,  say for some $x$, $x \notin \Delta$ but $\textsf{faulty}(x) \in S$.
By assumption~1, since $\textsf{faulty}(x)$ belongs to the stable model, $\textsf{healthy}(x)$ cannot belong to the stable model.
Consequently, for any rule, $\R$ that contains component $x$ in the body, the atom in the head of the rule (say $\textsf{value}(y,c)$, where $c \neq 0$) cannot be present in the stable model. 
If the observed value of component $y$ (in the encoding $\R_\text{OBS}$) is $\textsf{observed}(y,c)$, then the constraint $\leftarrow \textsf{not value}(y,c), \textsf{observed}(y,c)$ is not satisfied. Therefore, the assumption that $S$ is a stable model is violated. 
\end{proof}

\subsubsection*{Active diagnosis}
\label{sec:active}

In active diagnosis, we use multiple inputs/configurations to isolate the faulty components.
We explained the advantage of using multiple configurations previously in section~\ref{sec:model_based}.
Each configuration generates a new set of observations. Algorithm~\ref{alg:active} iteratively
solves the basic diagnosis problem for each observation (configuration). %
The final diagnosis is the \emph{hitting set} of the individual diagnoses. This procedure is equivalent to the notion of diagnosis using multiple observations~\cite{IgnatievMWM19}.

\smallskip
\begin{small}
\begin{algorithm}[t!]
\label{alg:active}
\caption{Active diagnosis}
\label{alg:two}
\KwData{$\textsf{SYS}$, set of configurations $C_1,\cdots,C_k$}
\KwResult{Set $\Delta$ of faulty components}
\For{$i \in 1 \cdots k$}{ 
$\textsf{OBS}_i \gets$ expected-value($\textsf{COMP}$)\;
$\Delta_i \gets$ diagnosis$(\textsf{SYS},\textsf{COMP},\textsf{OBS}_i)$\;
}
$\Delta \gets \bigcap_{1 \leq i \leq k} \Delta_i$\;
\end{algorithm}
\end{small}

The hardness lies in determining the configurations for active diagnosis (algorithm~\ref{alg:active}). 
We present the active diagnosis problem and solution under the assumption that a \emph{single component may fail at any given time}. 
However, this assumption does not affect the generality of the proposed approach. 

\vspace{0.2em}
\begin{definition}[Active diagnosis]
\label{defn:active}
    Given a diagnosis system $(\textsf{SYS},\textsf{COMP})$, find a set $\{C_1,C_2,\cdots,C_k\}$ of configurations s.t. if a component $v$ is faulty, then $v$ is identified as the faulty component due to algorithm~\ref{alg:active}.
\end{definition}
\vspace{0.1em}

To find the configurations for active diagnosis, 
we first enumerate all fault scenarios. 
Due to the assumption that a single component may be faulty, a system with $n \in \mathbb{N}$ components has exactly~$n$~fault scenarios. 
We generate the different fault scenarios using the following ASP rules (denoted by $\R_\text{fault}$),
\begin{gather*}
     \textsf{scenario}(S) \leftarrow 1 \leq I \leq n, \\
     \{\textsf{fault}(S,X) \colon \textsf{component}(X)\} \leftarrow  \textsf{scenario}(S) \text{ and} \\ 
     X \neq Y \leftarrow \textsf{fault}(I1,X), \textsf{fault}(I2,Y), I1 \neq I2. 
\end{gather*}
 Predicates of the above form have a precise interpretation, including $\textsf{scenario}$ and $\textsf{fault}$ (table~\ref{tab:predicates}). %

We use the following predicates for representing the configuration of switches, the expected and observed values of components. 
\begin{itemize}
    \item $\textsf{on}(i,x,y)$ states that in configuration $i$, the switch in the link between components $x$ and $y$ is on.
    \item $\textsf{observed}(s,i,x,v)$ states that in fault scenario $s$ under configuration $i$, the observed value of component $x$ is $v$.
    \item $\textsf{value}(s,i,x,v)$ states that in fault scenario $s$ under configuration $i$, the expected value of component $x$ is~$v$.
\end{itemize}

The following rules denoted by $\R_\text{config}$ characterize configuration~$I$ by the set of switches that are turned on.
\begin{gather*}
    \textsf{config}(I) \leftarrow 1 \leq I \leq k \text{ and } \\
    \{\textsf{on}(I,X,Y) \colon \textsf{switch}(X,Y)\} \leftarrow  \textsf{config}(I).
\end{gather*}

After enumerating the fault scenarios and configurations, we compute the expected value of the components for each fault scenario  using the system rules~$\R_{\textsf{SYS}}$.
The criterion for successful diagnosis remains unchanged, i.e.,  the observed and the expected values match for every fault scenario,
\begin{align*}
     &\bot \leftarrow \textsf{value}(S,X,V), \textsf{not observed}(S,X,V) \text{ and }\\
    &\bot \leftarrow \textsf{not value}(S,X,V), \textsf{observed}(X,V).
\end{align*}

\subsubsection*{Safety during active diagnosis}
\label{sec:safety}
In active diagnosis, we reconfigure the system to obtain additional observations. 
However, reconfigurations should not violate the safety constraints of a system~\cite{XuTM15}.
We encode safety constraints in ASP (section~\ref{sec:asp}) characterizing prohibited configurations.

For example, a common requirement is that AC buses should not be simultaneously powered by two AC sources. We capture this requirement by the ASP constraint
\[X1 = X2 \leftarrow \textsf{type}(Y,\text{ac\_bus}), \textsf{on}(I,X1,Y), \textsf{on}(I,X2,Y).\]

Another common safety requirement is that essential buses are always powered. It is captured by the ASP constraint,
\[  \bot  \leftarrow \textsf{component}(X), \textsf{type}(X,\text{essential}), \textsf{not powered}(X).\]

We require that any switch adjoining a generator or rectifier unit be open when that component becomes faulty. 
Using the following rule, we encode this requirement as its contrapositive (i.e., a switch is used only if the corresponding components are healthy).
\begin{align*}
    \forall X,Y, I:&~\Big{(} \textsf{switch}(X,Y) \land  \textsf{config}(I)  \land~\textsf{on}(I,X,Y) \land \\ & \land  (\textsf{type}(X, \text{source}) \lor \textsf{type}(X, \text{rectifier})) \rightarrow \\ &\rightarrow \exists J:~ \textsf{config}(J) \land J < I \land \textsf{healthy}(J,Y) \Big{)}. 
\end{align*}

The good news is that the addition of safety rules does not make the problem computationally harder~\cite{FriedrichGN90}. On the contrary, these safety constraints rule out a significant number of configurations and speed up the computation. %

We generate the set of safe configurations by solving the ASP encoding $\Pi_\text{active}$, where \[\Pi_\text{active} = \Pi_\text{basic} \cup \R_\text{fault} \cup \R_\text{config} \cup \R_\text{safety}.\]

\begin{lemma}
\label{lemma:active}
Let $\R_{\textsf{OBS}}^k,\cdots,\R_{\textsf{OBS}}^k$ be observations obtained from configurations $C_1,\cdots,C_k$, corresponding to a solution for the encoding $\Pi_\text{active}$. If component $x$ is faulty, it belongs to $\Delta$ (algorithm~\ref{alg:active}), terminating the active diagnosis.
\end{lemma}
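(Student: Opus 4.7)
The plan is to bootstrap Lemma~\ref{lemma:basic} across the configurations $C_1,\ldots,C_k$ and then close under intersection. Since $\Pi_\text{active}$ extends $\Pi_\text{basic}$ with $\R_\text{fault}$, $\R_\text{config}$, and $\R_\text{safety}$, a solution for $\Pi_\text{active}$ fixes (i) a single fault scenario (the standing single-fault assumption ensures each scenario pins one faulty component) and (ii) a set of safe configurations $C_1,\ldots,C_k$ whose observations $\R_{\textsf{OBS}}^i$ are exactly the expected values computed by the system rules $\R_{\textsf{SYS}}$ under that scenario. So the starting point is to argue that for each $i$, the tuple $(\textsf{SYS},\textsf{COMP},\R_{\textsf{OBS}}^i)$ is a well-formed instance of Problem~\ref{defn:basic_prob}, and that the per-configuration call diagnosis$(\textsf{SYS},\textsf{COMP},\textsf{OBS}_i)$ inside Algorithm~\ref{alg:active} returns a stable model of $\Pi_\text{basic}\cup\R_{\textsf{OBS}}^i$.

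Next, I would apply Lemma~\ref{lemma:basic} to each iteration. Suppose $x$ is the faulty component. By Lemma~\ref{lemma:basic}, the stable model $S_i$ of $\Pi_\text{basic}\cup\R_{\textsf{OBS}}^i$ contains $\textsf{faulty}(x)$ (otherwise, as in the proof of Lemma~\ref{lemma:basic}, assumption~1 would force $\textsf{healthy}(x)\in S_i$, the rule whose body mentions $x$ would propagate its expected value, and the consistency constraint $\bot \leftarrow \textsf{not value}(X,V),\textsf{observed}(X,V)$ would fire at some downstream component, contradicting stability). Hence $x \in \Delta_i$ for every $i \in \{1,\ldots,k\}$.

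Finally, I would take the intersection over $i$. Since $x\in\Delta_i$ for all $i$, $x\in\bigcap_{1\le i\le k}\Delta_i = \Delta$ by the definition in Algorithm~\ref{alg:active}. This is exactly the required conclusion, and it discharges the termination statement because once $x$ is identified as the unique culprit in $\Delta$, Algorithm~\ref{alg:active} has no further configurations to process.

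The only genuinely delicate step is the first one: showing that the observations $\R_{\textsf{OBS}}^i$ produced by each safe configuration really do behave like the $\textsf{OBS}$ of Problem~\ref{defn:basic_prob}, i.e.\ that reconfiguration does not break the correspondence between $\textsf{SYS}$'s rules and the observed values under the fault scenario. This reduces to noting that $\R_\text{safety}$ rules out precisely the configurations that would invalidate the behavioral rules (e.g.\ double-driven AC buses, open switches adjoining faulty sources), so every $C_i$ admitted by $\Pi_\text{active}$ yields a configuration under which the system rules fire as intended; once that is established, the reduction to Lemma~\ref{lemma:basic} is immediate.
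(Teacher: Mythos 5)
Your proposal is correct and follows essentially the same route as the paper: both arguments reduce the claim to Lemma~\ref{lemma:basic} applied to each per-configuration instance $\Pi_\text{basic}\cup\R_{\textsf{OBS}}^i$ and then invoke the intersection $\Delta=\bigcap_i \Delta_i$ in algorithm~\ref{alg:active} (the paper phrases this as a contradiction---if $x\notin\Delta$ then some $\Delta_i$ omits $x$, contradicting Lemma~\ref{lemma:basic}---while you argue it directly). Your additional care about the well-formedness of each observation set and the role of $\R_\text{safety}$ goes slightly beyond what the paper writes down, but it does not change the substance of the argument.
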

\begin{proof}
Suppose $x$ is faulty but does not belong to $\Delta$, then there exists some $1 \leq i \leq k$, such that $\textsf{faulty}(x)$ does not belong to the solution (i.e., stable model) of $\R_{\textsf{OBS}}^i \cup \Pi_{\textsf{basic}}$. Consequently, $\textsf{faulty}(x)$ does not belong to the stable model corresponding to the configuration $C_i$. However, Lemma~\ref{lemma:basic} implies the opposite. 
\end{proof}

\section{Sensor Placement for Diagnosis}
\label{sec:sensor}

The intrinsic dependence of diagnosis on observations implies that diagnosis is also dependent on the sensors that provide these observations.
Classically, the main focus of sensor placement is the optimal placement of sensors to ensure structural observability. 
In contrast, the focus here is the placement of sensors to 
ensure the success of the fault diagnosis algorithm.
The following section discusses the sensor placement problem for the two fault diagnosis algorithms.

\begin{definition}[Sensor placement]
\label{defn:sensor_placement}
    Given a system $\textsf{SYS}$, find a subset $\textsf{COMP}$ of components such that 
    \begin{enumerate}
        \item  $|\textsf{COMP}| = m$ and
        \item  for any subset $V_{f}$ of components, if $V_f$ is faulty and $\textsf{OBS}$ is an observation, then we require $V_f \subseteq \Delta$, where $\Delta$ is the diagnosis for $(\textsf{SYS},\textsf{COMP},\textsf{OBS})$.
    \end{enumerate}
\end{definition}
\vspace{0.1em}

\subsubsection*{Sensor placement for basic diagnosis}

Although we do not know the \emph{fault scenario} (i.e., the set of faulty components) a priori, we must place sensors to identify the faulty components in every fault scenario.
 
The first step for solving the sensor placement problem using ASP is the enumeration of all the fault scenarios.
We assume that only one component may fail at a given time. %
As the rules for generating these fault scenarios do not change, we do not present them here (see $\R_\text{fault}$ in section~\ref{sec:active}).

The next step is the characterization of the sensor locations.
We directly encode the locations of sensors in ASP using the  choice rule (denoted by $\R_{senor}$)
$\{\textsf{sensor}(X) \colon \textsf{component}(X)\}=m. $
The rule simply states that $m$ different components have sensors.

Subsequently, the behavior of the components is captured by the set~$\R_{SYS}$ of rules (see section~\ref{sec:basic}). 
Finally, the criterion for the correctness of the sensor's locations does not change, i.e., we  check whether the basic diagnosis procedure can correctly identify the fault scenario. We restate these rules ($\R_\text{consistency}$).
\begin{align*}
     \bot \leftarrow& \textsf{value}(S,X,V),\textsf{not observed}(S,X,V) \text{ and }\\
     \bot \leftarrow& \textsf{not value}(S,X,V), \textsf{observed}(S,X,V).
\end{align*}
The locations of the sensors correspond to a solution for the ASP encoding $\Pi_\text{sensor-basic} = $
$\Pi_\text{basic} \cup \R_\text{fault} \cup R_\text{sensor}$.

\subsubsection*{Sensor placement for active diagnosis}

In the following section, we address the sensor placement problem for active diagnosis (algorithm~\ref{alg:active}).
\begin{figure}[t!]
\centering
\begin{tikzpicture}
\begin{axis} [ybar = .03cm,
    bar width = 3pt,
    xmin = 0,
    xmax = 500,
    ytick = {0,10,20,30,50,100,150,180,200},
    enlarge x limits = {value = .15, upper},
    enlarge y limits = {abs = .8},
    legend pos=north west,
    xlabel = number of components,
    ylabel = runtime (sec),
    width=0.9\columnwidth
]
\addplot[draw=none,fill=barBlue!80] coordinates {
(100,1.094)
(150,4.108)
(200,9.709)
(250,18.894)
(300,34.613)
(350,54.074)
(400,92.617)
(450,120.558)
(500,175.577)};
\addplot[draw=none,fill=barRed] coordinates {
(100,1.103)
(150,3.942)
(200,9.238)
(250,19.613)
(300,35.186)
(350,53.643)
(400,84.989)
(450,120.719)
(500,174.304)
};
\addplot[draw=none,fill=barGrey] coordinates {
(100,1.153)
(150,3.879)
(200,9.322)
(250,18.803)
(300,34.747)
(350,54.187)
(400,87.05)
(450,120.424)
(500,175.188)
};
\legend{$|$sensors$|$ = 10, $|$sensors$|$ = 20,$|$sensors$|$ = 30};
\end{axis}    
\end{tikzpicture}
\caption{Time to find the positions of the sensors with respect to the number of components.}
\label{fig:set1}
\end{figure}
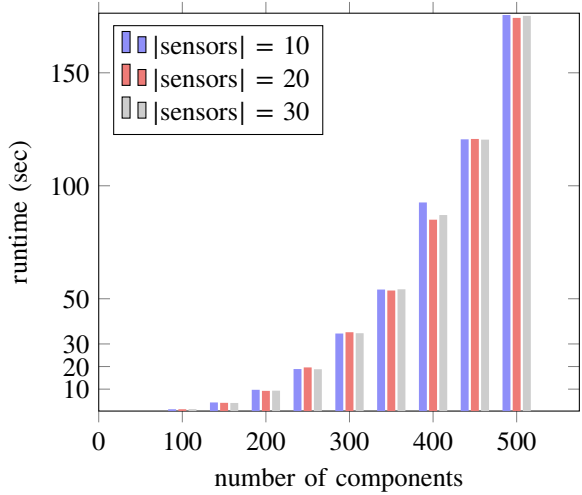

\begin{definition}[Sensor placement]
\label{defn:sensor_placement_active}
    Given a system $\textsf{SYS}$, find a subset $\textsf{COMP}$ of components such that 
    \begin{enumerate}
        \item  $|\textsf{COMP}| = m$ and
        \item  for any subset $V_{f}$ of faulty components,  then the result of active diagnosis (definition~\ref{defn:active}) classifies $V_f$ as faulty. 
    \end{enumerate}
\end{definition}

\begin{small}
\begin{algorithm}[b!]
\label{alg:sensor}
\caption{Sensor placement for active diagnosis}
\KwData{$\textsf{SYS}$, \newline $m$: max number of sensors}
\KwResult{Minimal set of positions for sensors}
\For{$i \in 1 \cdots m$}{ 
$\textsf{COMP}_i \gets$ place-sensors(\textsf{SYS},i) {\color{mygray} $\rhd$  definition~\ref{defn:sensor_placement_active}}\; 
\If{$\textsf{COMP}_i \neq \emptyset$}{
break \;
}
}
return $\textsf{COMP}_i$\;
\end{algorithm}
\end{small}

The locations of the sensors correspond to a solution for the ASP encoding $\Pi_\text{active-sensor} =$  $\Pi_\text{active} \cup \R_\text{sensor}$.

To minimize the number of sensors used, we follow an incremental approach similar to IDIAG.
In algorithm~\ref{alg:sensor}, we initially set the number of sensors to one and find a position for this sensor to make active diagnosis feasible. 
If there is no solution (i.e., the ASP solver returns unsatisfiable), then we increase the number of sensors that can be used and run the procedure again recursively until the diagnosis is possible.

\section{Experiments}
\label{sec:expts}

We demonstrate the scalability of the proposed algorithm (algorithm~\ref{alg:sensor}) for determining the positions of sensors.
The runtime of this algorithm depends on the following three parameters  a) the number of components, b) the maximum number $m$ of sensors, and c) the number of configurations.

We examine the effect of these three parameters by varying one while keeping the other two fixed.
We generate 25 random electrical power systems for each data point and record the runtime as the average over the runtimes of these 25 instances. The experiments were run on macOS with an M1 processor and 16 GB memory. %

The results of these experiments show that the number of sensors  and number of configurations have very little effect on the runtime as compared to the number of components (fig.~\ref{fig:set1}, fig.~\ref{fig:set2}a, and fig.~\ref{fig:set2}b). This observation is consistent with the combinatorial nature of the sensor placement problem because there are $|V| \choose m$ possible locations for the sensors.

The runtime of this procedure also depends on the number of possible states (section~\ref{sec:modeling}) for each of the components~\cite{Wotawa2022Apr}. 
However, we fix this number as four, negating this parameter's effect. Note that the number of different states of any component in the electrical power system is four (section~\ref{sec:modeling}). 

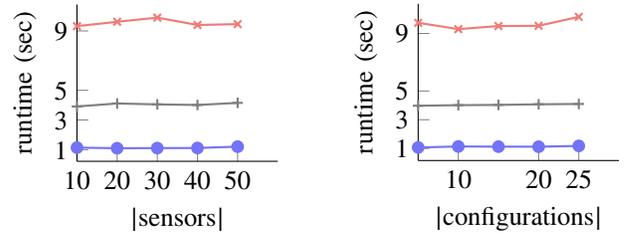
\begin{figure}[t!]
\begin{minipage}[b]{0.48\linewidth}
\centering
\begin{tikzpicture}
\begin{axis}[
     width=\columnwidth,
    axis lines*=left,
	xlabel={$|$sensors$|$},
	ylabel={runtime (sec)},
    enlarge x limits = {value = .25, upper},
    xtick={10,20,30,40,50},
    ytick = {1,3,5,9},
    legend style={at={(0.2, 0.7)},anchor=west}
    ]

\addplot[color=barBlue,mark=*,thick] coordinates {
(10,1.138)
(20,1.096)
(30,1.106)
(40,1.112)
(50,1.203)
};
\addplot[color=gray,mark=+,thick] coordinates {
(10,3.911)
(20,4.112)
(30,4.051)
(40,4.011)
(50,4.154)
};
\addplot[color=barRed,mark=x,thick] coordinates {
(10,9.323)
(20,9.618)
(30,9.897)
(40,9.402)
(50,9.459)
};
\end{axis}

\end{tikzpicture}
\hspace{0.2cm}
\end{minipage}
\begin{minipage}[b]{0.48\linewidth}
\centering
\begin{tikzpicture}
    \begin{axis}[
    width=\columnwidth,
    axis lines*=left,
	xlabel={$|$configurations$|$},
	ylabel={runtime (sec)},
    enlarge x limits = {value = .25, upper},
    xtick={10,20,25},
    ytick = {1,3,5,9},
    legend style={at={(0.2, 0.6)},anchor=west}
    ]

\addplot[color=barBlue,mark=*,thick] coordinates {
(5,1.07)
(10,1.138)
(15,1.127)
(20,1.122)
(25,1.178)
};

\addplot[color=gray,mark=+,thick] coordinates {
(5,3.973)
(10,4.013)
(15,4.03)
(20,4.074)
(25,4.096)
};
\addplot[color=barRed,mark=x,thick] coordinates {
(5,9.743)
(10,9.302)
(15,9.517)
(20,9.535)
(25,10.156)
};
\end{axis}
\end{tikzpicture}
\end{minipage}
    \caption{Time to find the positions of the sensors with respect to a) the number of sensors and b) the number of configurations for active diagnosis. The number of components is fixed as 100,150 and 200 (blue, grey and red plots).}
    \label{fig:set2}
\end{figure}

Algorithm~\ref{alg:sensor} for finding the sensor locations times out when the system has around 1000 components. However, the basic fault diagnosis algorithm works for systems with $10^4$ components~\cite{Wotawa2022Apr}. The following section discusses a method to solve the sensor placement problem for large \emph{modular} systems. %

\section{Modular structure}
\label{sec:modular}

A modular system consists of smaller parts called \emph{modules}, each of which can be examined as a separate system. The reason for considering a modular system is to solve the sensor placement problem for large systems (size ${\sim}10^4$ components). 

Digraphs represent the components and connections of a system under examination, with subgraphs capturing subsystems called modules. All \emph{mutually dependent} components lie within the same module. A component depends on another component when one's value depends on the other's value.

\begin{definition}[Module]
\label{defn:module}
Given a system $\textsf{SYS} =\langle \mathcal{G},\chi \rangle$, a system $M = \langle \mathcal{G}_M, \chi_M \rangle $ is a module of \textsf{SYS}, 
if 
\begin{itemize}
    \item $\mathcal{G}_M$ is a \emph{subgraph} of $G$,
    \item $\chi_M$ respects $\chi$, i.e., for every component $c$ in $\mathcal{G}_M$, $\chi_M(c) = \chi (c)$,
    \item if $c \in \mathcal{G}$ is a component that is strongly connected to some other component in $\mathcal{G}_M$, then $c \in \mathcal{G}_M$. 
\end{itemize} 
\end{definition}
In particular, a system $\textsf{SYS} = (\mathcal{G},\chi)$ is \emph{modular} if $G$  can be partitioned into a disjoint union of modules. A fixpoint construction returns a set of modules corresponding to $\textsf{SYS}$. We describe some useful operators first.
\begin{align*}
    M^0 & \coloneqq \{ S^j \mid S^j \text{ is a strongly connected component of } G\} \\
   V^i & \coloneqq \{ v \mid v \in S,~S \in M^i \}  \text{ and }\\
   \widetilde{V^i} & \coloneqq V \setminus V_i.
\end{align*}
Let us assume that the modules have been defined until the iteration $i$ and we inductively define the iteration ($i+1$), i.e., define $M^{i+1}$ using $M^{i}$. 
If $S_j^i \in M^i$ is the $j$-th module in $i$-th step, we define $S_j^{i+1}$ (the iterate of $S_j^i$) to be 
\begin{align*}
S_j^{i+1} = S_j^i \cup \{v \in \widetilde{V^i} \mid & v \text{ has an outgoing edge into } S_j^i  \\
& \text{ and } \text{ if } r<j, \text{  then } v \notin S_r^{i+1}\}.
\end{align*}
We terminate the iteration when a fixed point $M_i$ is reached (in at most $i \leq |V|$ steps---the number of components). 

Since each module is essentially a strongly connected subgraph, a system can be divided into modules by using the 
standard \emph{strongly connected} components algorithm from graph theory~\cite{west_introduction_2000}. If any vertex does not belong to a strongly connected subgraph, it can be added (randomly) to any of the existing modules in the previous iteration. Further, two modules can be merged to form a larger module.
The modules of a system $\textsf{SYS} = \langle \mathcal{G},\chi \rangle$ can be determined in time $\mathcal{O}(|V|(|V| + |E|))$.

\begin{small}
\begin{algorithm}[!t]
\label{alg:module}
\caption{Modular sensor placement}
\KwData{$\textsf{SYS}$, \newline $M_1,M_2,\cdots M_q$: modules \newline $m$: max number of sensors, \newline $k'$: max number of configurations}
\KwResult{$S$: Locations for sensors \\ $C_1,\cdots,C_k$: configurations for active diagnosis}
\For{$i \in 1\cdots q$}{ 
$S_i \gets$ place-sensors($M_i, k'$) {\color{mygray} $\rhd$  definition~\ref{defn:sensor_placement_active}}\; 
$S = S \cup S_i$ \; 
}
$k \gets q\cdot k'$ \;
$C_1,\cdots,C_k \gets $ $k$-configurations($\textsf{SYS},S$) {\color{mygray} $\rhd$  definition~\ref{defn:active}}\;
return $S, (C_1,\cdots,C_k)$ \;
\end{algorithm}
\end{small}

To solve the sensor placement problem for a modular system consisting of modules $M_1, M_2,\cdots, M_q$ ($q \in \mathbb{N})$, we solve the sensor placement problem for each of the modules. Subsequently, we use these sensors to find the configurations required for active diagnosis. The permitted number of configurations (for active diagnosis) $k = q \cdot k'$, where $k'$ is the maximum number of configurations required to find faulty components in any module $M_i$ ($1 \leq i \leq q)$ (algorithm~\ref{alg:module}).

\begin{example}
 Two modules $M_1 = \langle \mathcal{G}_{M_1}, \chi_{M_1} \rangle $ and $M_2 = \langle \mathcal{G}_{M_2}, \chi_{M_2} \rangle$ are \emph{identical} if $\mathcal{G}_{M_1}$ and $\mathcal{G}_{M_2}$ are isomorphic and the component types match, via, e.g., a graph isomorphism algorithm~\cite{west_introduction_2000}. 
We need to solve the sensor placement problem only once for any set of identical modules.
\end{example}

For example, a 3-bit ripple carry adder that uses one half-adder and two full-adders is modular (fig.~\ref{fig:3adder}). The two full adders are identical, i.e, have the same structure and behavior. %
The 3-bit adder gets as input two 3-bit Boolean numbers $a_2a_1a_0$ and $b_2b_1b_0$ and returns the sum $s_2s_1s_0$ and the carry $c_3$. The input and the outputs are observable (i.e., have sensors). However, the intermediate carry bits are not observable.
The adders consist of blocks of four NAND gates (fig.~\ref{fig:half} and fig.~\ref{fig:full}).  
The goal is to determine whether one of the blocks has failed in the 3-bit adder.
The output may be correct even though there is a fault in one of the small adders. 
To determine whether one of the blocks is faulty, we place sensors on components to obtain relevant observations.

For the larger $3$-bit adder, we use the sensor locations obtained by considering each module to be an independent system.
The positions of the sensors for the half and full adders are marked in yellow in fig.~\ref{fig:half} and fig.~\ref{fig:full}, respectively. 
Finally, during runtime, we use observations obtained from the sensors in conjunction with the model $\R_{\text{3-bit adder}}$ to find faulty components.

 The following theorem shows that sensors placed according to the modular sensor placement algorithm are sufficient to identify any faulty component at runtime.
\begin{theorem}
\label{thm:main}
Given a modular system consisting of modules $M_1,\cdots,M_q$ and $\textsf{COMP}$ the set of sensors returned by algorithm~\ref{alg:module}, and if $x$ is a faulty component, then $\textsf{faulty}(x)$ belongs to the stable model of solving the ASP encoding $\Pi_{\textsf{active}}$.
\end{theorem}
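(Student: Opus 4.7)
The plan is to reduce the claim to the single-system active-diagnosis guarantee (Lemma~\ref{lemma:active}), exploiting the single-fault assumption of Section~\ref{sec:active} together with the closure property that any two mutually dependent components must lie in the same module (definition~\ref{defn:module}). Fix the unique module $M_{i^*}$ that contains $x$; it exists and is unique because the system is partitioned into disjoint modules. Under the single-fault assumption, every component outside $M_{i^*}$ is healthy, and because inter-module edges never close a strongly connected subgraph, the signals that such external components feed into $M_{i^*}$ under any configuration take their nominal expected values.

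Next, by construction of algorithm~\ref{alg:module}, the subroutine place-sensors$(M_{i^*}, k')$ returned a set $S_{i^*} \subseteq \textsf{COMP}$ satisfying definition~\ref{defn:sensor_placement_active} for the module $M_{i^*}$ in isolation. Consequently, there exist $k'$ safe configurations whose associated observations suffice for active diagnosis on $M_{i^*}$, and Lemma~\ref{lemma:active} applied to the diagnosis subsystem $(M_{i^*}, S_{i^*})$ places $\textsf{faulty}(x)$ into every stable model of the module's active-diagnosis encoding. I would then argue that the $k = q \cdot k'$ configurations produced by $k$-configurations$(\textsf{SYS}, S)$ can be viewed as extensions of the module-wise configurations to switches outside $M_{i^*}$, so that the observations recorded at the sensors $S_{i^*}$ coincide with those obtained when diagnosing $M_{i^*}$ alone. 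Combining these, the consistency constraints in $\R_\text{consistency}$ force $\textsf{faulty}(x)$ into the stable model of $\Pi_\textsf{active}$ on the full system: any alternative stable model that declares $x$ healthy must instead blame some component $y \neq x$, but (i) $y \in M_{i^*}$ contradicts Lemma~\ref{lemma:active} on $M_{i^*}$, while (ii) $y \notin M_{i^*}$ would by the single-fault assumption contradict the nominal behavior of $y$'s own module, whose sensors (chosen so that active diagnosis on that module succeeds) would then register a mismatch.

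The main obstacle is the faithful transport of the single-module diagnosis result into the full-system encoding. One must rule out that inter-module flows let a fault in $M_{i^*}$ be mimicked by a combination of effects attributable to components in other modules, producing a spurious minimal diagnosis. This is precisely where the strong-connectivity closure in definition~\ref{defn:module} is essential: all feedback is contained within a module, so external influence on $M_{i^*}$ is a one-way flow of values from modules that, under the single-fault assumption, carry their nominal values. Making this precise will likely require an induction along a topological order on the modules and careful bookkeeping to align the configuration and fault-scenario indices of the isolated encoding with those of $\Pi_\textsf{active}$.
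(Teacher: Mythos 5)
Your proposal is correct and, once you reach your closing observation that a rigorous version ``will likely require an induction along a topological order on the modules,'' it coincides with the paper's own proof: the paper forms the acyclic inter-module dependency graph (acyclicity following from the strong-connectivity closure in definition~\ref{defn:module}) and inducts on a module's distance from a root, invoking the module-level guarantee of place-sensors at each step and the fact that once upstream faults are diagnosed the values entering a downstream module are determined, so that module can be treated as an independent system needing only $k'$ further configurations. The only cosmetic difference is that you front-load the single-fault assumption to argue that upstream modules carry nominal values, whereas the paper obtains the upstream values from the inductive hypothesis itself; both arguments are at the same level of rigor.
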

\begin{proof}
To prove the correctness of the modular approach, we introduce the notion of \emph{dependency graph}. The vertices of this graph correspond to modules of the system. There is an edge between two vertices of this dependency graph if a pair of components exist in the original graph with an edge between them.
The definition of modules (definition~\ref{defn:module}) implies that the dependency graph is \emph{acyclic}. A module is a \emph{root} if it does not have any in-edges in the dependency graph. 

We prove the correctness of algorithm~\ref{alg:module} by induction on the maximum distance of a module from any root.

\smallskip
\noindent \textit{Base case:} Any faulty component in a root module needs at most $k'$ configurations. 
The above assertion is a direct result of the correctness of the sensor placement algorithm (algorithm~\ref{alg:sensor}). More specifically, if any module is assumed to be an independent system, any faulty component in the module can be identified by algorithm~\ref{alg:active} with $k'$ configurations.

\smallskip
\noindent \textit{Inductive hypothesis:} Any faulty component in a module  that is $d$ distance from a root module can be diagnosed with at most $d\cdot k'$ configurations.

\smallskip
\noindent \textit{Induction step:} We need to show that any faulty component in a module that is at a distance of $d+1$ from a root can be diagnosed with at most $(d+1)k'$ configurations.
By the inductive hypothesis, the faulty components in all the modules at most $d$ distance away from the root can be determined using $d\cdot k'$ configurations. 
This implies that every other component's state/value (in these modules) can also be determined. 
Now, consider a module $M$ that is $d+1$ distance from the root; the inputs to this module are known. Therefore, this module $M$ can be treated as an independent system.
Any fault in this module can be identified with additional $k'$ configurations.
\end{proof}

\vspace{0.4em}

\begin{figure}[t!]
    \centering
    \includegraphics[width=0.8\linewidth]{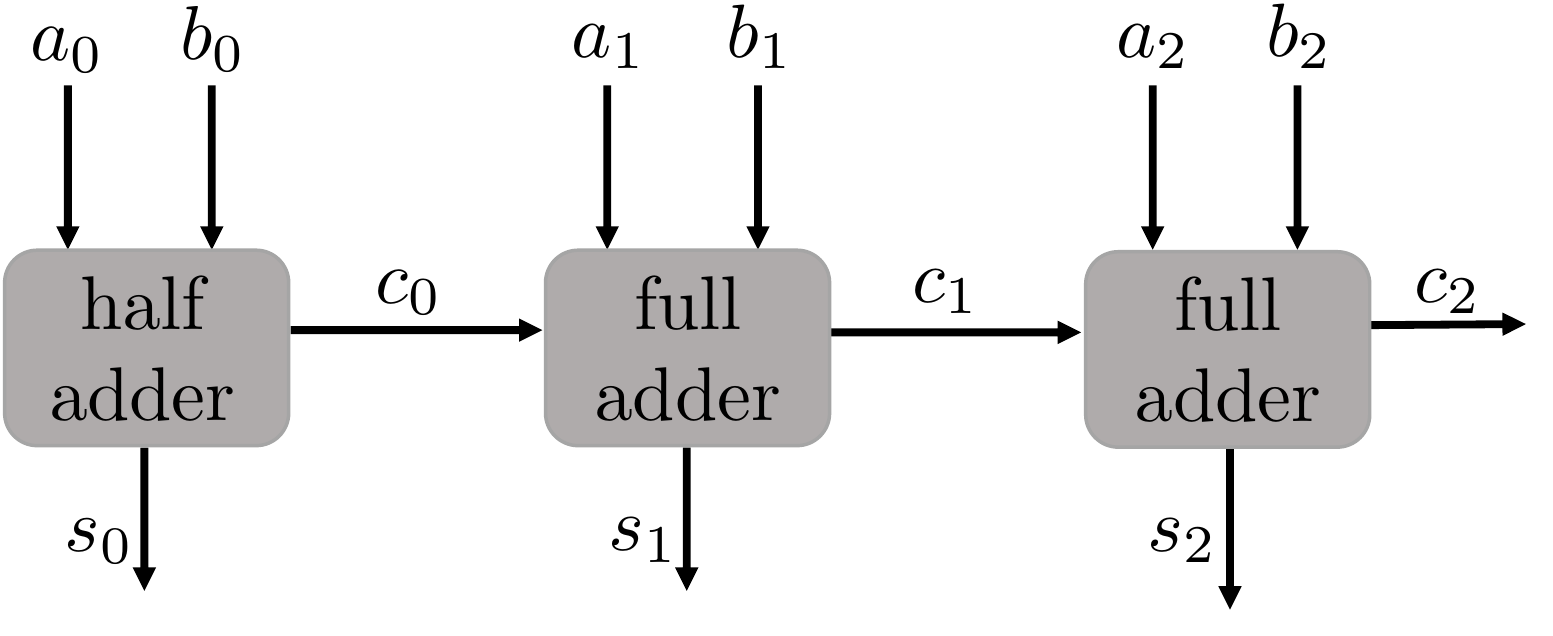}
    \caption{A 3-bit ripple carry adder consisting of two full-adders and one half-adder.}
    \label{fig:3adder}
\end{figure}

\begin{figure}[t!]
    \centering
    \includegraphics[width=0.55\linewidth]{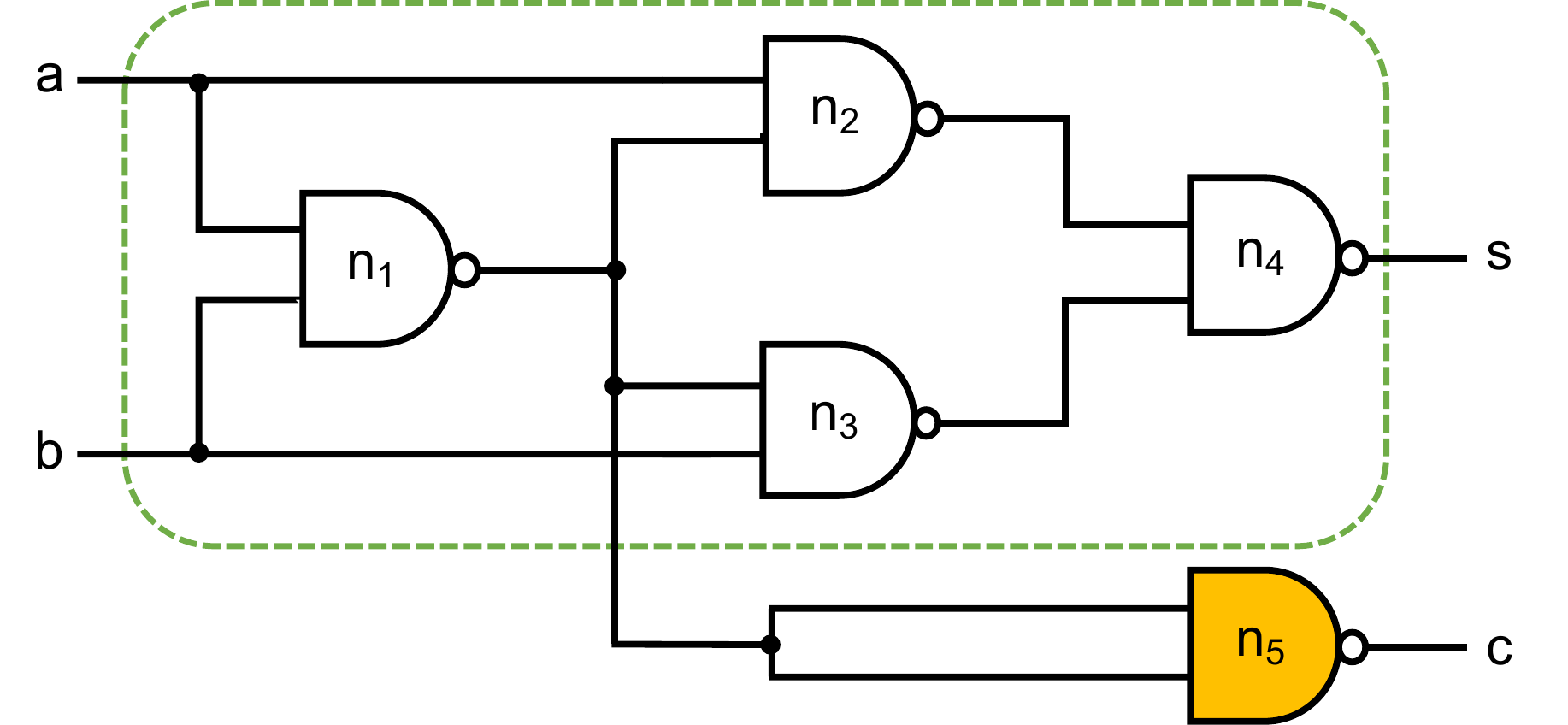}
    \caption{The half-adder requires a single sensor at $n_5$ to identify whether the fault is in one of the components in the green box.}
    \label{fig:half}
\end{figure}

\begin{figure}[t!]
    \centering
    \includegraphics[width=1\linewidth]{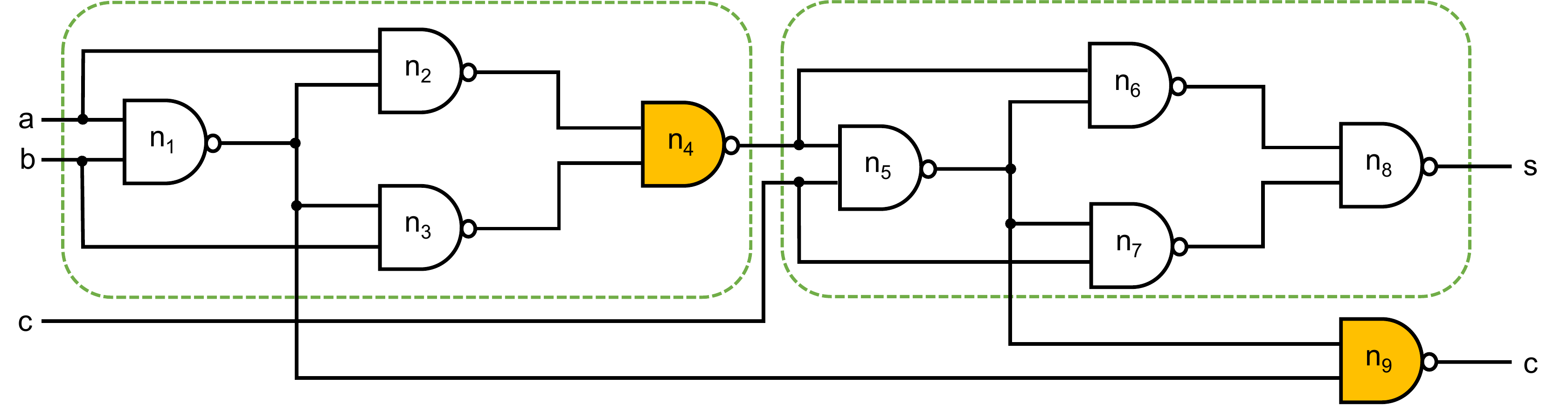}
    \caption{The locations of the sensors obtained by independently solving the sensor placement problem for the full-adder are marked in yellow. Two sensors are sufficient to identify whether some component in either of the two green boxes is faulty. }
    \label{fig:full}
\end{figure}

\section*{Case study: Electrical power system}
\label{sec:electric}
\label{sec:case_study}

We analyze the sensor placement problem for a system model of a single-line diagram of an electrical power system (adapted from Honeywell~\cite{Duces2007Oct}) comprising generators, batteries, rectifier units, and loads (fig.~\ref{fig:sensors} and fig.~\ref{fig:larger_ckt}). 

\begin{figure}[bh!]
    \centering
    \includegraphics[width=0.9\linewidth]{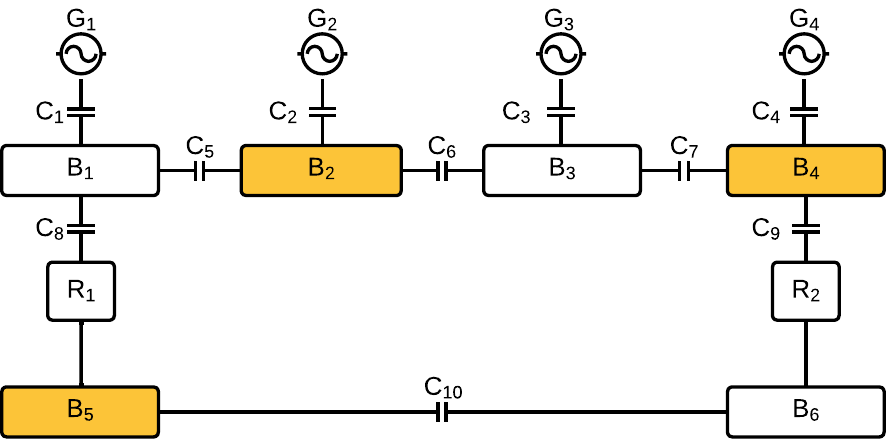}
    \caption{The components with sensors are highlighted in yellow.}
    \label{fig:sensors}
\end{figure}

\smallskip
\subsubsection*{Structure of the electrical power system}
The electrical power system in the figure has the following components ($V$) buses $B_1,\cdots,B_6$, generators $G_1,\cdots,G_4$, switches $C_1,\cdots,C_{10}$ and rectifiers $R_1, R_2$.
As discussed previously, the predicate $\textsf{component}$ is constructed by  
$\{ \textsf{component}(X)\mid X \in V \}.$
The connections are specified by the $\textsf{link}$ predicate given by the set 
$\{ \textsf{link}(X,Y)\mid (X,Y) \text{ is an edge in } E \}.$

\noindent Lastly, type of the component is defined by the set 
\begin{align*}
    &\{ \textsf{type}(X, \text{bus})\mid X \in \{B_1,\cdots,B_6\} \} \\ 
    \cup~& \{\textsf{type}(X, \text{generator})\mid X \in \{G_1,\cdots,G_4\} \} \\
    \cup~& \{\textsf{type}(X, \text{switch})\mid X \in \{C_1,\cdots,C_{10}\} \} \\
    \cup~& \{\textsf{type}(X, \text{rectifier})\mid X \in \{R_1,R_2)\}\}.
\end{align*}

\subsubsection*{Behavior of the electrical power system}
\label{sec:orig_electric}

We assume without loss of generality that there are no line faults. Hence, a switch has power if its incoming line has power.

From the perspective of diagnosis, there are some subtle differences between digital circuits and electrical power circuits. 
First, the links in electrical power systems can be altered by the use of switches. 
Second, there is no predefined notion of outputs. The outputs correspond to observations from the sensors.
Lastly, unlike a NAND gate, which has a fixed number of inputs and outputs, a load (bus) in the electrical power system may be connected to any number of components. Hence, the model (rules) for the electrical power system differs from the model for the half-adder.

Each component (type) in the electrical system has four possible states (values) that capture whether the component is healthy and powered. The $\textsf{value}$ function can be directly used for the system rules ($\R_\text{adder}$ in section~\ref{sec:basic}). Alternatively, we can also use an additional predicate $\textsf{powered}$ to simplify the rule from an expressibility standpoint~\cite{FerrarisL05}. The simplified formulation is
\begin{align*}
    \textsf{powered}(Y) \leftarrow \textsf{component}(Y),  \textsf{healthy}(Y),  \textsf{powered}(X), \\ \textsf{on}(X,Y), \textsf{switch}(X,Y), \textsf{link}(X,Y). 
\end{align*}

This rule simply states that a component is powered if it is connected to another component that is powered and the switch between them is turned on. 
The interpretation of the predicates in the above rule is presented in Table~\ref{tab:predicates}. 

If there is no switch in the link $(X,Y)$, we drop the predicate $switch$ in the previous rule and obtain the following rule.
\begin{align*}
    \textsf{powered}(Y) \leftarrow \textsf{component}(Y),  \textsf{healthy}(Y), \\ \textsf{powered}(X), \textsf{link}(X,Y). 
\end{align*}

We denote the above two rules for the electrical power system by $\R_\text{electric}$.
It is sufficient to qualify the fault status of component $Y$ and not $X$. Indeed, due to assumption~2 (section~\ref{sec:basic}), if component $X$ is faulty, then the predicate $\textsf{powered}(X)$ cannot be proved.

We analyze the power system (fig.~\ref{fig:sensors}) under the environment assumptions listed below. These assumptions are fairly common and have been adopted from the reference~\cite{XuTM15}. 
\begin{itemize}
    \item The buses $B_5$ and $B_6$ are essential and, therefore, are always healthy. This assumption is modeled by the following constraints.
$\bot \leftarrow~\textsf{faulty}(B_5)$, and $\bot \leftarrow \textsf{faulty}(B_6).$

\item One of the two rectifiers is  always healthy, i.e.,  
$\textsf{healthy}(R_1) \leftarrow \textsf{faulty}(R_2)$ and $\textsf{healthy}(R_2) \leftarrow \textsf{faulty}(R_1)$.

\item One of the left generators ($G_1$ or $G_2$) and one of the right generators ($G_3$ or $G_4$) are healthy, i.e., 
$\textsf{healthy}(G_1) \leftarrow \textsf{faulty}(G_2)$, $\textsf{healthy}(G_2) \leftarrow \textsf{faulty}(G_1)$, $\textsf{healthy}(G_3) \leftarrow \textsf{faulty}(G_4)$ and $\textsf{healthy}(G_4) \leftarrow \textsf{faulty}(G_3)$.
\end{itemize}

\begin{table}[t!]
    \centering
        \caption{The set of configurations for diagnosis.}
    \begin{tabular}{cl}
      \toprule
      \textbf{Configuration id} \qquad  & \textbf{Switches that are on} \\ \midrule
       1 & $C_2, C_4$ \\
       2 & $C_1, C_3, C_5, C_7$ \\
       3 & $C_1, C_8$ \\
       4 & $C_2, C_5, C_8$ \\
       5 & $C_3,C_5,C_6,C_8$ \\
       6 & $C_5,C_6,C_7,C_8$ \\
       7 & $C_4,C_9,C_{10}$ \\
       8 & $C_3,C_7,C_9,C_{10}$ \\
       9 & $C_2,C_6,C_7,C_9,C_{10}$ \\
       10 & $C_1, C_5,C_6,C_7,C_9,C_{10}$ \\
        \bottomrule
    \end{tabular}
    \label{tab:config}
\end{table}

The procedure (i.e., solving the ASP encoding~$\Pi_\text{active-sensor}$) returns predicates $\textsf{sensor}(b_2)$, $\textsf{sensor}(b_4)$,  and $\textsf{sensor}(b_5)$
 as a solution to sensor location (fig.~\ref{fig:sensors}) and the predicates
\begin{align*}
    & \textsf{on}(1, C_2), \textsf{on}(1, C_4), \textsf{on}(2, C_1), \textsf{on}(2, C_3), \textsf{on}(2, C_5), \\
    & \textsf{on}(2, C_7), \textsf{on}(3, C_1), \textsf{on}(3, C_8), \textsf{on}(4, C_2), \textsf{on}(4, C_5), \\
    & \textsf{on}(4, C_8), \textsf{on}(5, C_3), \textsf{on}(5, C_5), \textsf{on}(5, C_6), \textsf{on}(5, C_8), \\
    & \textsf{on}(6, C_5), \textsf{on}(6, C_6), \textsf{on}(6, C_7), \textsf{on}(6, C_8), \textsf{on}(7, C_4), \\
    & \textsf{on}(7, C_9), \textsf{on}(7, C_{10}), \textsf{on}(8, C_3), \textsf{on}(8, C_7), \textsf{on}(8, C_9), \\
    & \textsf{on}(8, C_{10}), \textsf{on}(9, C_2), \textsf{on}(9, C_6), \textsf{on}(9, C_7), \textsf{on}(9, C_9), \\
    & \textsf{on}(9, C_{10}), \textsf{on}({10},C_1), \textsf{on}({10}, C_5), \textsf{on}({10}, C_6),  \\
    & \textsf{on}({10}, C_7), \textsf{on}({10}, C_9), \text{ and } \textsf{on}({10}, C_{10}).
\end{align*}
for the  sequence of configurations (interpreted in table~\ref{tab:config}).%

The simplified electrical power system (fig.~\ref{fig:sensors}) represents a more extensive electrical power system (fig.~\ref{fig:larger_ckt}) of a Boeing 747~\cite{Duces2007Oct}. In addition, to the previously mentioned four types, the system has transformer-rectifier units (\textsf{TRU}), and alternating current transformers (\textsf{ACT}). The buses are characterized as \textsf{AC}/\textsf{DC} buses and have the characteristic of either high or low voltage. Some buses are essential, i.e., they always require power, marked as \textsf{ESS}. In this case study, these characterization model safety constraints (section~\ref{sec:safety}).

\begin{table*}[p]
    \caption{Collection of the rules for modeling active diagnosis and sensor placement.}
    \centering
    \resizebox{0.82\textwidth}{!}{
    \begin{tabular}{cl}
      \toprule
      \textbf{Number} \qquad  & \textbf{Rule and its interpretation} \\ \midrule
       1 & \textbf{guess a set of healthy components} \\
         & $\{\textsf{healthy}(X)\} \leftarrow \textsf{component}(X)$ \\ \midrule
       2 & \textbf{if a component is not healthy, then it is faulty} \\
         & $\textsf{faulty}(X) \leftarrow \textsf{component}(X), \textsf{not healthy}(X)$ \\ \midrule
       3 & \textbf{when is the output of nand 1, i.e., true?} \\
         & $\textsf{value}(\textsf{out}(X),1)\leftarrow \textsf{type}(X,\text{nand}), \textsf{healthy}(X), \textsf{value}(\textsf{in}_1(X),0)$ \\ \midrule
       4 & \textbf{the value of an faulty component} \\
         & $\textsf{value}(\textsf{out}(X),0)\leftarrow \textsf{component}(X), \textsf{faulty}(X)$ \\ \midrule
       5 & \textbf{there are no wire faults} \\
         & $\textsf{value}(Y,V) \leftarrow \textsf{link}(X,Y), \textsf{value}(X,V)$ \\ \midrule
       6 & \textbf{if the output of a nand gate is not true, then it is false} \\
         & $\textsf{value}(\textsf{out}(X),0)\leftarrow  \textsf{not value}(\textsf{out}(X),1)$ \\ \midrule
       7 & \textbf{find the minimal diagnosis} \\
         & $\#minimize\{1,X:\textsf{faulty}(X)\}$ \\ \midrule
       8 & \textbf{the expected value of each component is computed on the fly at runtime} \\
         & $\textsf{value}(I,Y, \text{on}) \leftarrow \textsf{value}(I,X, \text{on}), \textsf{link}(X,Y), \textsf{on}(I,X,Y), \textsf{switch}(X,Y)$ \\
         & $\textsf{value}(I,X, \text{on}) \leftarrow \textsf{source}(X),  \textsf{config}(I)$ \\ \midrule
       9 & \textbf{a component cannot be both healthy and faulty} \\
         & $\bot \leftarrow \textsf{healthy}(X), \textsf{faulty}(X)$ \\ \midrule
      10 & \textbf{the guess has to match the sensor information} \\
         & $\bot \leftarrow  \textsf{not healthy}(X), \textsf{input\_healthy}(I,X)$ \\
         & $\bot \leftarrow \textsf{not faulty}(X), \textsf{input\_faulty}(I,X)$ \\ \midrule
      11 & \textbf{the inferred value of the components} \\
         & $\textsf{inferred}(I,Y, \text{on}) \leftarrow \textsf{inferred}(I,X, \text{on}),\textsf{link}(X,Y),\textsf{on}(I,X,Y),\textsf{switch}(X,Y),\textsf{healthy}(Y)$ \\ \midrule
      12 & \textbf{inference rule for source} \\
         & $\textsf{inferred}(I,X, \text{on}) \leftarrow \textsf{healthy}(X),\textsf{source}(X)$ \\ \midrule
      13 & \textbf{if a component is faulty, it cannot transmit power across} \\
         & $\textsf{inferred}(I,X, \text{off})\leftarrow \textsf{faulty}(X)$ \\ \midrule
      14 & \textbf{construct fault scenarios} \\
         & $ \textsf{scenario}(S)\leftarrow 1 <= I <= n$ \\
         & $\{\textsf{fault}(S,X):\textsf{component}(X)\}\leftarrow  \textsf{scenario}(S)$  \\
         & $X!=Y\leftarrow \textsf{fault}(S1,X), \textsf{fault}(S2,Y), \textsf{not} S1 = S2$ \\ \midrule
      15 & \textbf{come up with k configurations for the switches} \\
         & $ \textsf{config}(I)\leftarrow1 <= I <= k$ \\
         & $\{\textsf{on}(I,X,Y):\textsf{switch}(X,Y)\}\leftarrow  \textsf{config}(I)$ \\ \midrule
      16 & \textbf{the expected value of each component w.r.t. scenario and configuration} \\
         & $\textsf{value}(S,I,Y, \text{on}) \leftarrow \textsf{value}(S,I,X, \text{on}), \textsf{link}(X,Y), \textsf{on}(I,X,Y), \textsf{switch}(X,Y),  \textsf{not fault}(S,Y),  \textsf{config}(I),  \textsf{scenario}(S)$ \\
         & $\textsf{value}(S,I,X, \text{on}) \leftarrow \textsf{source}(X),  \textsf{not fault}(S,X),  \textsf{scenario}(S), \textsf{config}(I)$ \\ \midrule
      17 & \textbf{input values depend on the sensors} \\
         & $\textsf{input\_healthy}(S,X)\leftarrow \textsf{sensor}(X),  \textsf{not fault}(S,X)$ \\
         & $\textsf{input\_faulty}(S,X)\leftarrow \textsf{sensor}(X), \textsf{fault}(S,X)$ \\ \midrule
      18 & \textbf{the sensors also return the value of the component} \\
         & $\textsf{inferred}(S,I,X, \text{on})\leftarrow \textsf{sensor}(X), \textsf{value}(S,I,X, \text{on})$ \\
         & $\textsf{inferred}(S,I,X, \text{off})\leftarrow \textsf{sensor}(X), \textsf{value}(S,I,X, \text{off})$ \\ \midrule
      19 & \textbf{guess a set of healthy components} \\
         & $\{\textsf{healthy}(S,I,X)\}\leftarrow \textsf{component}(X),  \textsf{config}(I),  \textsf{scenario}(S)$ \\ \midrule
      20 & \textbf{if a component is not healthy, then it is faulty} \\
         & $\textsf{faulty}(S,I,X)\leftarrow \textsf{component}(X),  \textsf{config}(I), \textsf{scenario}(S),  \textsf{not healthy}(S,I,X)$ \\ \midrule
      21 & \textbf{a component cannot be both healthy and faulty} \\
         & $\bot \leftarrow \textsf{healthy}(S,I,X), \textsf{faulty}(S,I,X)$ \\ \midrule
      22 & \textbf{the guess has to match the sensor information} \\
         & $\bot \leftarrow \textsf{not healthy}(S,I,X), \textsf{input\_healthy}(S,X),  \textsf{config}(I),  \textsf{scenario}(S)$ \\
         & $\bot \leftarrow \textsf{not faulty}(S,I,X), \textsf{input\_faulty}(S,X),  \textsf{config}(I),  \textsf{scenario}(S)$ \\ \midrule
      23 & \textbf{the inferred value of the components} \\
         & $\textsf{inferred}(S,I,Y, \text{on})\leftarrow \textsf{inferred}(S,I,X, \text{on}),\textsf{link}(X,Y),\textsf{on}(I,X,Y),\textsf{switch}(X,Y),\textsf{healthy}(S,I,Y),  \textsf{config}(I),  \textsf{scenario}(S)$ \\ \midrule
      24 & \textbf{inference rule for source} \\
         & $\textsf{inferred}(S,I,X, \text{on})\leftarrow \textsf{healthy}(S,I,X),\textsf{source}(X), \textsf{config}(I), \textsf{scenario}(S)$ \\ \midrule
      25 & \textbf{if a component is faulty, it cannot transmit power across} \\
         & $\textsf{inferred}(S,I,X, \text{off})\leftarrow \textsf{faulty}(S,I,X),  \textsf{config}(I), \textsf{scenario}(S)$ \\ \midrule
      26 & \textbf{a component is faulty, if it faulty in all the configurations} \\
         & $\textsf{faulty}(S,X)\leftarrow \textsf{component}(X),  \textsf{config}(I), \textsf{faulty}(S,I,X),  \textsf{not config}(I1), \textsf{healthy}(S,I1,X),  \textsf{scenario}(S)$ \\ \midrule
      27 & \textbf{consistency rule: a faulty component should not be diagnosed as healthy} \\
         & $\bot \leftarrow \textsf{fault}(S,X), \textsf{healthy}(S,X)$ \\
         \midrule
     28 & \textbf{place $\mathbf{m}$ sensors} \\
     & $\{\textsf{sensor}(X)\}=m \leftarrow \textsf{component}(X)$ \\
        \bottomrule
    \end{tabular}
    }
    \label{tab:rules}
\end{table*}

We consider the sensor placement problem for this four module system (marked in different colors in fig.~\ref{fig:larger_ckt}) using the \emph{modular} approach.
The structure of this electrical power system can be encoded in a similar manner. Further, the types of components in this larger system are the same as the smaller system. Therefore, rules describing the system and safety constraints remain unchanged. 
The rules (table~\ref{tab:rules}) required for solving the sensor placement problem are as in section~\ref{sec:sensor}. 

We solve the sensor placement problem just once for every module. None of the modules are identical. We first use the fixed the point method to find the modules of the system. These modules are highlighted in different colored boxes in the figure. We solved the sensor placement problem using algorithm~\ref{alg:module} (sensor positions marked in yellow in fig.~.\ref{fig:larger_ckt}).
The algorithm returned the positions of the sensors for the entire system in less than $10$ seconds. 

\begin{figure}[thb!]
    \centering
    \includegraphics[width=0.9\linewidth]{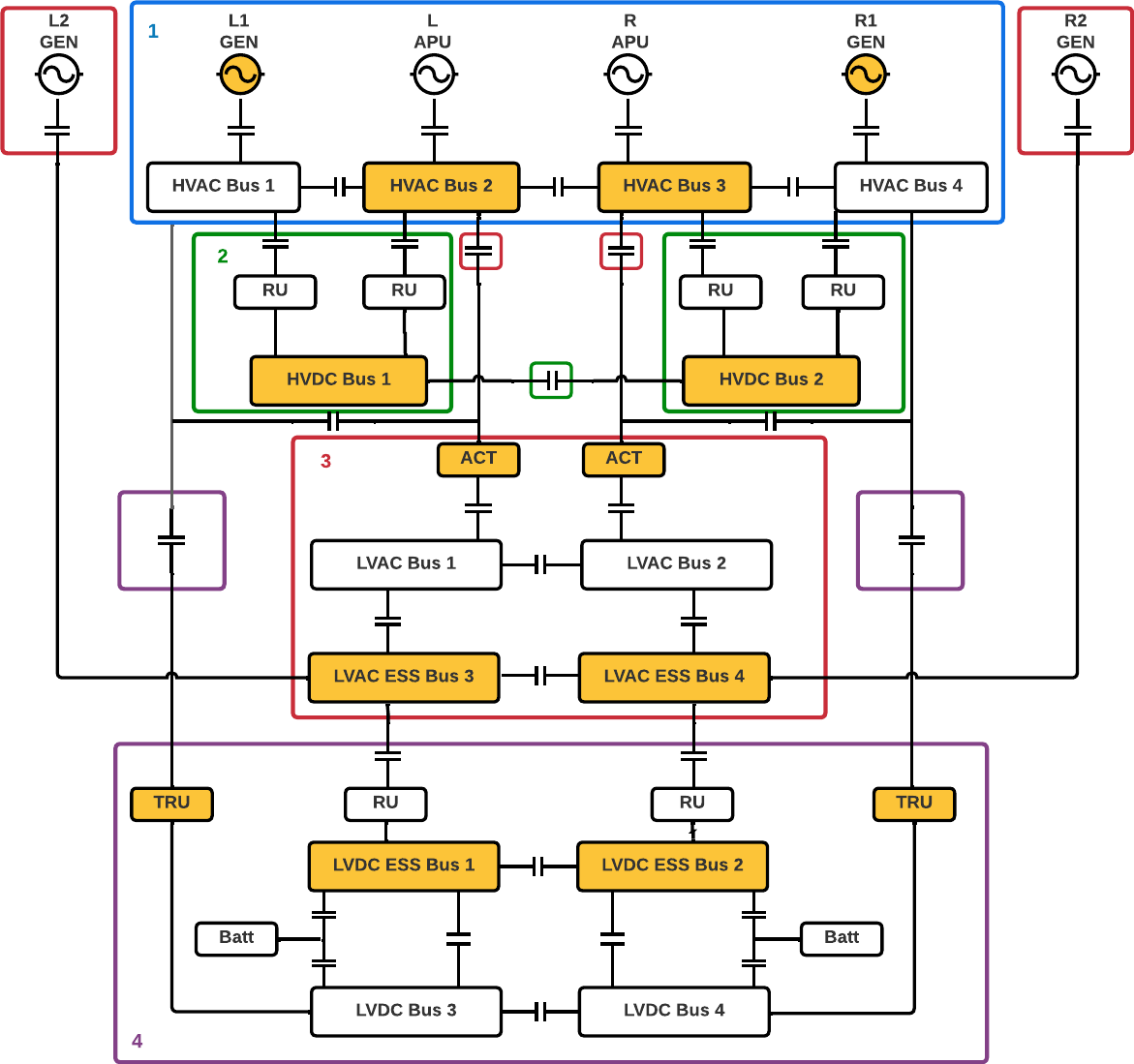}
    \caption{The four modules of the electric power system are highlighted in different colors. The sensors positions computed as a result of algorithm~\ref{alg:module} are highlighted in yellow.}
    \label{fig:larger_ckt}
\end{figure}

\section{Related work}
\label{sec:related}

Due to the high economic impact of faults, substantial work on the automated diagnosis and the placement of sensors varies in both their theoretical backdrop and the philosophy of design and implementation~\cite{Blanke}. 

The existing methods for sensor placement essentially find a minimal set of variables to be measured/monitored through the analysis of either a) the structure of the system or b) the redundancy of variables~\cite{Trave-Massuyes2006Oct,Rosich2007Dec}. 
Some works formulate the sensor placement problem as integer linear programs due to the  combinatorial nature of this problem. 
In Rosich et.al.~\cite{Rosich2009Jan} and Fijany et.al.~\cite{Fijany2006Mar}, fault diagnosis are defined by means of linear inequality constraints. Most of the work on fault diagnosis has study procedures for performing diagnosis (given a set of sensors) rather than determining the locations for placing sensors~\cite{Rosich2007Dec}. 
Furthermore, very little existing work uses logic solvers for sensor placement.

Reiter et.al. in their seminal paper demonstrate the use of logic models for diagnosis~\cite{reiter:87}.  
Subsequent work on model-based approaches has proved highly effective~\cite{Gao2015}.  
In this context, Prolog was used for diagnosis in~\cite{FriedrichGN90}. However, the propositional nature of Prolog severely restricts the type of systems that can be diagnosed. This limitation led to the use of constraint solvers in place of logic solvers~\cite{FelfernigSZ12, FelfernigWR15}.

Despite the ease of describing a system, consistency-based reasoning suffers from high computational complexity. However, the availability of cheap computing power together with modern SAT/ASP solvers has led to a resurgence of this technique for self-adaptive autonomous systems~\cite{Wotawa2019Mar, Wotawa2020Jun}. 
Recently, Wotawa et.al. focused on whether current ASP solvers can be used for diagnosis~\cite{Wotawa2022Apr}. 
Their reported runtime results indicate that ASP solvers are fast enough for diagnosis and also provide enough expressiveness of the input language for modeling. However, there is a lack of existing work on a consistency-based formulation for sensor placement.

When the sensors are already fixed in the system, it is possible to model electrical power systems through a discrete event system (DES) by encoding the observed values from the sensor into the states (of the DES) and using the switch configuration to define the transition between the states. 
Thus, the DES approach is enough to model the diagnosis problem when active network reconfiguration is permitted~\cite{Sampath1996, SampathLT98}. 
However, this approach is unsuitable for the sensor placement problem. 
For one, the size of a DES describing the system is exponential, and the combinatorial choice for the locations of the sensors makes this problem more complex. 
The proposed algorithm for the active diagnosis problem returns $k$ configurations. This bound $k$ is also used in \textit{anytime diagnosis algorithms}, i.e.,  algorithms with a time budget~\cite{Felfernig2018Aug}.

\section{Conclusion and future work}

Motivated by the usefulness of ASP in model-based diagnosis,
we extend the use of ASP for sensor placement. %
The main advantage of the proposed approach is the reusability of the system's rules. 
More specifically, the rules for fault scenarios, switch configurations, and consistency of diagnosis can be described independently. Further, we demonstrate the ease of modifying the rules of the system to account for faulty components and auxiliary sensors.
Our approach shifts the hardness from designing specific algorithms to formulating the system rules in first-order predicate logic (or in ASP). 
Describing the system rules is feasible since we do not need to describe  incorrect behaviors and fault models.%

Experiments show that our procedure is successful in finding sensor locations for systems with around 500 components in a few minutes. 
Additionally, the basic fault diagnosis algorithm using ASP is applicable for systems with $10^4$ components~\cite{Wotawa2022Apr}. 
Combining these two facts, if the systems are modular and the size of these modules is around 500, then the sensor placement problem for the entire system can be solved by independently solving the sensor placement problem for each module. This procedure only minimizes the number of sensors in each module (locally) which is greedy. %

We assume that sensors never fail.  However, there is a need to relax this assumption in order to better model real-world scenarios. 
While it is possible to  encode the sensors as separate components, this trick does not work when the sensors are yet to be placed.
An additional difficulty is the availability of runtime information about the faulty sensors. We need new assumptions to solve the problem when the sensors themselves have to be diagnosed.
As an extension of this work, we plan to address the \emph{robustness} of sensor placement. %

\printbibliography

\appendices

\end{document}